\documentclass[online]{tlp}
\usepackage[arxiv]{preamble}
\usepackage{fixkeywords}

\ifarxiv
\usepackage{hyperref}
\fi

\usepackage{amsfonts}
\usepackage{amssymb}
\usepackage{amsmath}
\usepackage{latexsym}
\usepackage{stmaryrd}
\usepackage{microtype}
\usepackage{booktabs}
\usepackage{bigdelim}
\usepackage{comment}
\usepackage{horn} % formatting of listings
\usepackage{customcommands}  % custom commands

\begin{document}
% \lefttitle{Charalambidis et al.}
\lefttitle{A. Charalambidis, B. Kostopoulos, and P. Rondogiannis}

\jnlPage{1}{8}
\jnlDoiYr{2021}
\doival{10.1017/xxxxx}

\title[From Time to Space: The Impact of Linearity in Higher-Order Datalog]{From Time to Space: The Impact of Linearity in Higher-Order Datalog}

\begin{authgrp}
\author{%
      \gn{Angelos} \sn{Charalambidis},
      \gn{Babis} \sn{Kostopoulos}}
\affiliation{Harokopio University of Athens, Greece}
\emails{acharal@hua.gr}{kostbabis@hua.gr}
\author{%
      \gn{Panos} \sn{Rondogiannis}}
\affiliation{National and Kapodistrian University of Athens, Greece}
\email{prondo@di.uoa.gr}
\end{authgrp}
%\history{\sub{xx xx xxxx;} \rev{xx xx xxxx;} \acc{xx xx xxxx}}

%
\maketitle

\begin{abstract}
We consider a fragment of Higher-Order Datalog with negation and argue that it
generalizes the familiar and important fragment of \emph{Linear Datalog}. We
investigate the expressive power of this fragment, establishing a tight
connection with the hierarchy of space complexity classes. In particular, we
demonstrate that for all $k \ge 1$, the $(k+1)$-order fragment of \emph{Stratified Linear
Higher-Order Datalog}$^\neg$ captures $\EXPSPACE[(k-1)]$. This result suggests that
restricting programs to linear recursion shifts the expressive power of the
corresponding fragments from time to space, generalizing the classical result
that (Stratified) Linear Datalog captures \textsf{NL}. Unlike the first-order setting
where an ordering assumption is required to capture $\mathsf{NL}$, our results hold
without any such assumption on the input database. The
proof relies on simulating space-bounded Turing machines using Stratified Linear
Higher-Order Datalog$^\neg$ programs and providing a space-efficient evaluation
of the query program. We argue that identifying such computationally well-behaved
fragments is a crucial step towards paving the way for practical implementations of
Higher-Order Datalog.
\ifarxiv
Under consideration for publication in Theory and Practice of Logic Programming (TPLP).
\fi
\end{abstract}
\begin{keywords}
  Higher-Order Datalog,
  Linear Datalog,
  Descriptive Complexity.
\end{keywords}

\section{Introduction}
Over more than two decades, research in logic programming has shifted towards studying and efficiently
implementing fragments that are sufficiently expressive for demanding practical applications. Answer Set
Programming (ASP) exemplifies this trend: despite not being Turing-complete, ASP has found diverse applications
and gained broad acceptance as a robust paradigm. Recent investigations have explored extending
ASP ``beyond $\mathsf{NP}$''~\citep{BJT16Stable-unstablesemanticsBeyondNPnormallogic,ART19BeyondNPQuantifyingoverAnswerSets,FLRSS21PlanningIncompleteInformationQuantifiedAnswerSet}.
More recently, \emph{Higher-Order Datalog with negation}~\citep{iclp24} has been proposed as a powerful
candidate for such an extension. In particular, it has been shown~\citep{iclp25} that $k+1$-order Datalog
with negation, $k\geq 1$, captures $\EXPTIME[k]$ under the well-founded semantics, and captures co-$k$-$\mathsf{NEXPTIME}$
and $k$-$\mathsf{NEXPTIME}$ under the stable model semantics for cautious and brave reasoning, respectively.
However, this power comes at a cost: increased expressiveness inevitably places a significant burden on implementation.
Consequently, identifying a fragment of Higher-Order Datalog with negation that can express problems beyond $\mathsf{NP}$
and that is still amenable to efficient implementation, remains a major challenge.

\begin{table}[h]
\centering
\caption{Expressive power results (entries with a ``$*$'' use the ordering assumption).}
%\caption{Expressive Power of Linear Higher-Order Datalog$^\neg$ (the entries with a ``$*$'' hold under the ordering assumption).}
\label{table1}
{\tablefont\begin{tabular}{@{\extracolsep{\fill}}p{5.9cm}cccccl@{}}
\topline
{\bf Fragment} &  \multicolumn{4}{c}{\bf Order of the program} \\
\cmidrule(lr){2-5}
 & 1 & 2 & $\cdots$ & $k+1$
\midline
Higher-Order Datalog$^{\neg}$  & \textsf{P}$^*$ & \textsf{EXPTIME} & $\cdots$ & $k$-\textsf{EXPTIME} \\
Stratified Higher-Order Datalog$^{\neg}$ &  &  &  &  &
\midline
Stratified Linear Higher-Order Datalog$^{\neg}$ & \textsf{NL}$^*$ & \textsf{PSPACE} & $\cdots$ & $(k-1)$-\textsf{EXPSPACE}
\botline
\end{tabular}}
\end{table}

A promising avenue of research is the study of a \emph{linear} fragment of Higher-Order Datalog with negation.
In the first-order Datalog setting, \emph{linearity} is a pivotal property defined by restricting every rule
body to contain \emph{at most} one recursive atom~\cite[Chapter~15]{Ullmann}. This restriction is practically
significant for two reasons. First, it is widely recognized that Linear Datalog suffices to express most ``real-life''
recursive queries, see, e.g.,~\citep[p.~405]{Consens1990}. Second, from an implementation standpoint, Linear Datalog
is far more amenable to specialized optimizations than full Datalog~\citep{Ullmann}. Theoretically, this efficiency
is formalized by the fact that Linear Datalog captures $\mathsf{NL}$ (Nondeterministic Logarithmic Space) on ordered
databases~\citep{dantsin}. Intuitively, this implies that a query can be evaluated using logarithmic space relative
to the input size. In practical terms, the evaluation engine need not store the entire history of derivations,
but only the current state of the computation (e.g., the ``current node'' in a traversal). This
favorable complexity profile extends to programs with negation: it has been shown that \emph{Stratified Linear Datalog}
- where rules are limited to linear recursion within the same stratum -
also captures $\mathsf{NL}$ on ordered databases~\citep{Consens1990}.
It is therefore interesting (and potentially practical) to investigate what linearity
means in a higher-order setting and its precise expressive power.

In this paper, we undertake the study of the above question.
We generalize the concept of Stratified Linear Datalog to the higher-order setting,
getting a language which in the rest of the paper we will refer to as
\emph{Stratified Linear Higher-Order Datalog}$^\neg$.
Our extension is based on the notion of stratification introduced by~\cite{iclp24},
which ensures programs are stratified not only with respect to negation but also with
respect to higher-order application.
Our central contribution proves that for all $k \ge 1$, the
$(k+1)$-order fragment of Stratified Linear Higher-Order Datalog$^\neg$ captures $\EXPSPACE[(k-1)]$.
Unlike the first-order setting where an ordered database is required to capture
$\mathsf{NL}$, our higher-order characterization holds without any such ordering
assumption. Since $\EXPSPACE[k]$ is a subset of the $\EXPTIME[k]$ class
captured by the unrestricted Higher-Order Datalog$^\neg$ language~\citep{iclp25}, our results
indicate that, potentially, linearity leads to a more ``manageable''
language from an implementation point of view. Table~\ref{table1} shows more compactly the
aforementioned results: the second row is new, the first one is from~\citep{iclp25}.
Surprisingly, our results have a direct correspondence with Neil Jones' results~\citep{Jones}
on the expressive power of the \emph{tail-recursive} fragments of a restricted
higher-order functional language. Given the close conceptual link between tail-recursion and
linearity, it is remarkable but not totally unexpected that these distinct programming paradigms
capture identical complexity hierarchies.

In conclusion, the main contributions of the present paper can be summarized as follows:
\begin{itemize}
\item {\bf Extension of Linearity}: We extend the classical notion of \emph{linearity} to the class
      of higher-order Datalog programs. We believe that this extension establishes the basis for developing
      novel optimizations for higher-order logic programming.

\item {\bf Lower Bound Simulation}: We provide a simulation of space-bounded Turing machines
      using Stratified Linear Higher-Order Datalog$^\neg$, using higher-order
      predicates to encode the counting of the tape-space of the Turing machine, its configurations, and
      its execution. The simulation is natural and does not require any tedious encodings.
      This simplicity suggests that the proposed language is not merely a theoretical construct, but a natural
      and intuitive fragment of Higher-Order Datalog$^\neg$.

\item {\bf Upper Bound and Space Efficient Evaluation}: We present a space-efficient
      proof procedure for computing queries. By using top-down evaluation, we avoid the
      full program interpretation, which would otherwise
      consume space that is tower-exponential in the order of the program. We believe that a refinement
      of our proof procedure could lead to a viable implementation of
      the proposed language.
\end{itemize}
Sections~\ref{preliminaries} and~\ref{linear-datalog} introduce the syntax, semantics, and motivation for
Stratified Linear Higher-Order Datalog$^\neg$. Sections~\ref{simulation}, \ref{computation}, and the
\ifincludeappendix Appendix \else supplementary material \fi
establish theoretical space complexity bounds and the correctness proof of the space-efficient
evaluation algorithm.

% While HO-Datalog has been studied for its semantic properties and utility in functional-logic programming,
% a systematic classification of its expressive power with respect to the standard complexity hierarchy has been lacking.
% Specifically, as we ascend the hierarchy of types (from individuals to sets, sets of sets, etc.),
% does the expressive power of the language grow correspondingly? And if so, exactly which complexity classes are captured at each order?

% In this paper, we focus on the \emph{Linear} fragment of HO-Datalog$^\neg$.
% By restricting programs to linear recursion (a single recursive call per rule formulation), we demonstrate
% that the expressive power shifts from Time to Space. Specifically, we prove that $(k+1)$-order
% Linear HO-Datalog$^\neg$ captures \EXPSPACE[(k-1)] , mimicking the relationship
% where First-Order Linear Datalog captures \textsf{NL}~ (conceptually related to space-bounded paths).

% Our proofs rely on simulating higher-order Turing Machines using higher-order predicates to encode configurations and tape contents.
% By defining appropriate "counting modules"---higher-order predicates that represent integers of exponential magnitude---we can index the vast space
% of these complexity classes purely within the logic. These results clarify the theoretical standing of Linear Higher-Order Datalog and provide a
% framework for understanding the trade-offs between higher-order abstraction and computational resources.

\section{Higher-Order Datalog with Negation: Preliminaries}\label{preliminaries}

% \subsection{Syntax}

In this section we introduce the syntax of Higher-Order Datalog$^\neg$, following~\cite{iclp24}.
% For simplicity reasons, the syntax of $\HOL$ does not
% include function symbols; this is a restriction that can easily be lifted.
The language uses two base types: $\bool$, the Boolean domain, and $\basedom$,
the domain of data objects. The composite types are partitioned into
\emph{predicate} ones (assigned to predicate symbols) and \emph{argument} ones
(assigned to parameters of predicates).
\begin{definition}
  Types are either \emph{predicate} or \emph{argument}, denoted by $\pi$
  and $\rho$ respectively, and defined as:
  \begin{align*}
    %\sigma & := \iota \mid (\iota \rightarrow \sigma) \\
    \pi  & := \bool \mid (\rho \to \pi) \\
    \rho & := \basedom \mid \pi
  \end{align*}
\end{definition}

The binary operator $\to$ is right-associative. Every predicate type $\pi$ can be written in the form
$\rho_1 \to \cdots \to \rho_n \rightarrow \bool$,
$n\geq 0$ (for $n=0$ we assume that $\pi=\bool$).
%
% We proceed by defining the syntax of $\HOL$.
%

\begin{definition}
  The \emph{alphabet} of Higher-Order Datalog$^\neg$ consists of: \emph{predicate variables}
  of every predicate type $\pi$ (denoted by capital letters such as $\mathsf{P},\mathsf{Q},\ldots)$;
  \emph{predicate constants} of every predicate type $\pi$ (denoted by lowercase letters such as $\mathsf{p},\mathsf{q},\ldots$);
  \emph{individual variables} of type $\basedom$ (denoted by capital letters such as $\mathsf{X},\mathsf{Y},\ldots$);
  \emph{individual constants} of type $\basedom$ (denoted by lowercase letters such as $\mathsf{a},\mathsf{b},\ldots$);
  the \emph{equality}  constant $\approx$ of type $\basedom \to \basedom \to o$;
  the \emph{conjunction} constant $\wedge$ of type $\bool \to \bool \to \bool$;
  the \emph{inverse implication} constant $\lrule$ of type $\bool \to \bool \to \bool$; and
  the \emph{negation} constant ${\tt not}$ of type $\bool \to \bool$.
\end{definition}

Arbitrary variables (either predicate or individual ones) will be denoted by $\mathsf{R}$. % and its subscripted versions.% Save a line

\begin{definition}
  \label{def:expressions}

  The \emph{expressions} and \emph{literals} of Higher-Order Datalog$^\neg$ are
  defined as follows.
  Every predicate variable/constant and every individual variable/constant is an
  expression of the corresponding type; if $\mathsf{E}_1$ is an expression of type
  $\rho \to \pi$ and $\mathsf{E}_2$ an expression of type $\rho$ then
  $(\mathsf{E}_1\ \mathsf{E}_2)$ is an expression of type $\pi$.
  Every expression of type $\bool$ is called an \emph{atom}.
  If $\mathsf{E}$ is an atom, then $\mathsf{E}$ and $({\tt not}\, \mathsf{E})$
  are literals of type $\bool$; if $\mathsf{E}_1$ and
  $\mathsf{E}_2$ are expressions of type $\basedom$, then $(\mathsf{E}_1\approx
    \mathsf{E}_2)$ and ${\tt not}\, (\mathsf{E}_1\approx \mathsf{E}_2)$
  are literals of type $\bool$.
\end{definition}

We will omit parentheses when no confusion arises.
% To denote that an expression
% $\mathsf{E}$ has type $\rho$ we will often write $\mathsf{E}:\rho$.

\begin{definition}
  A \emph{rule} of Higher-Order Datalog$^\neg$ is a formula
  $\mathsf{p}\ \mathsf{R}_1 \cdots \mathsf{R}_n \lrule \mathsf{L}_1 \land \ldots \land \mathsf{L}_m$,
  where $\mathsf{p}$ is a predicate constant of type $\rho_1 \to \cdots \to \rho_n \to \bool$,
  $\mathsf{R}_1,\ldots,\mathsf{R}_n$ are distinct variables of types $\rho_1,\ldots,\rho_n$ respectively and
  the $\mathsf{L}_i$ are literals.
  The literal $\mathsf{p}\ \mathsf{R}_1 \cdots \mathsf{R}_n$ is the \emph{head} of the rule and
  $ \mathsf{L}_1 \land \ldots \land \mathsf{L}_m$ is the \emph{body} of the rule.
  A \emph{program} $\Prog$ of Higher-Order Datalog$^\neg$ is a finite set of rules.
\end{definition}

We will write
$\mathsf{L}_1,\ldots,\mathsf{L}_m$ instead of
$\mathsf{L}_1 \wedge \cdots \wedge \mathsf{L}_m$ for the body of a rule.
For brevity, we will often denote
a rule as $\mathsf{p} \ \overline{\mathsf{R}} \lrule \mathsf{B}$, where
$\overline{\mathsf{R}}$ is a shorthand for a sequence of variables
$\mathsf{R}_1 \cdots \mathsf{R}_n$ and $\mathsf{B}$ represents the body of the rule.
We will avoid using currying as much as possible and
will use tuples instead, a syntax that is more familiar to logic programmers.
The tuple syntax can be directly transformed to the curried one by a simple
preprocessing. So, for example, instead of {\tt succ Ord X Y} we will write {\tt succ(Ord,X,Y)},
instead of the partial application {\tt succ Ord} we will write {\tt succ(Ord)},
and so on. More generally, the partial application
$\mathsf{p}\ \mathsf{E}_1\ \cdots\ \mathsf{E}_n$ will be written as
$\mathsf{p}(\mathsf{E}_1,\ldots,\mathsf{E}_n)$.
\begin{example}\label{subset-example}
The following is a Higher-Order Datalog$^\neg$ program defining the \texttt{subset} predicate of
type $(\iota \to o) \to (\iota \to o) \to o$: it takes two first-order unary relations (ie., two sets)
as arguments and checks if the first is a subset of the second:
\begin{lstlisting}
subset(P,Q):-not nonsubset(P,Q).
nonsubset(P,Q):-P(X),not Q(X).
\end{lstlisting}
Note how we implement universal quantification in the body
of a rule: to express that \texttt{P} is a subset of \texttt{Q}, i.e.,
that every \texttt{X} that belongs to \texttt{P} also belongs to \texttt{Q}, we just require
that it is not the case that \texttt{P} is a \texttt{nonsubset} of \texttt{Q}, i.e., it
is not the case that there exists \texttt{X} that belongs to the first relation but not the other.\hfill\hbox{\proofbox}
\end{example}
The formal semantics of Higher-Order Datalog$^\neg$~\cite{iclp24} is detailed in the
\ifincludeappendix appendix \else supplementary material \fi, but is not
required to follow this paper. The programs considered here can be understood
purely declaratively. The fragments of Higher-Order Datalog$^\neg$ we investigate are
semantically well-behaved; the predicates we employ simply denote extensional higher-order relations. For
instance, a unary second-order predicate denotes a relation that accepts a classical set as an argument
and evaluates to $\mtrue$ or $\mfalse$.

The notion of \emph{order} of a predicate, is formally defined as follows:
\begin{definition}
  The \emph{order} of a type is recursively defined as follows:
  \[
    \begin{array}{rcl}
      \textit{order}(\iota)        & = & 0                                \\
      \textit{order}(o)            & = & 1                                \\
      \textit{order}(\rho \to \pi) & = & \max\{ \mathit{order}(\rho) + 1,
      \mathit{order}(\mathit{\pi}) \}                                     \\
      % \textit{order}(\rho_1 \rightarrow \cdots \rightarrow \rho_n \rightarrow o) & = & 1+\textit{max}(\{\textit{order}(\rho_i) \mid 1 \leq i \leq n\})
    \end{array}
  \]
  The order of a predicate constant (or variable) is the order of its type.
\end{definition}
\begin{definition}
  For all $k\geq 1$, \emph{$k$-Order Datalog$^\neg$} is the fragment of Higher-Order Datalog$^\neg$
  in which all variables have order less than or equal to $k-1$ and all predicate constants
  in the program have order less than or equal to $k$.
\end{definition}

\section{Stratified Linear Higher-Order Datalog$^\neg$}\label{linear-datalog}
In this section we formally define the class of \emph{Stratified Linear Higher-Order Datalog}$^\neg$
programs and illustrate it by examples. This class is a proper extension of the class of
\emph{Stratified Linear Datalog} programs, initially studied by~\cite{Consens1990}.

We need a notion of stratification for higher-order programs. In analogy to Datalog, there
exists a natural such notion for Higher-Order Datalog$^\neg$~\citep{iclp24}.
\begin{definition}\label{stratified}
A program $\Prog$ is called \emph{stratified} if there exists a function $S$ mapping
predicate constants to natural numbers, such that for each rule
$\mathsf{p} \ \overline{\mathsf{R}} \leftarrow \mathsf{L}_1,\ldots,\mathsf{L}_m$
and any $i\in\{1,\ldots, m\}$:
\begin{itemize}
  \item $S(\mathsf{q})\leq S(\mathsf{p})$ for every predicate constant
        $\mathsf{q}$ occurring in $\mathsf{L}_i$.
  \item If $\mathsf{L}_i$ is of the form $({\tt not}\,\,\mathsf{E})$, then
        $S(\mathsf{q})< S(\mathsf{p})$ for each predicate constant $\mathsf{q}$
        occurring~in~$\mathsf{E}$.
  \item For any subexpression of $\mathsf{L}_i$ of the form
        $(\mathsf{E}_1~\mathsf{E}_2)$, $S(\mathsf{q})< S(\mathsf{p})$ for every
        predicate constant $\mathsf{q}$ occurring in $\mathsf{E}_2$.
\end{itemize}
\end{definition}

A distinctive feature of this definition is the third condition; the intuition behind this
constraint is that Higher-Order Datalog$^\neg$ allows for the definition of higher-order predicates that simulate
negation (e.g., via a rule such as \lstinline|neg P :- not P|). Consequently, to ensure
a well-defined semantics, predicate constants occurring as arguments in a higher-order
application must be treated as if they occur within a negative context.

As shown in~\cite{iclp24}, stratified Higher-Order Datalog$^\neg$ programs have a unique stable
model which coincides with their well-founded model. For this reason, such programs are particularly
appealing from a semantic point of view.

We can now define the exact class of programs that we will be studying.
\begin{definition}[Stratified Linear Higher-Order Datalog$^\neg$]
A Higher-Order Datalog$^\neg$ program  $\mathsf{P}$ is called \emph{stratified linear} if it is
stratified with respect to a stratification function $S$ and, for every rule
$\mathsf{p}\ \mathsf{R}_1 \cdots \mathsf{R}_n \leftarrow \mathsf{L}_1, \ldots, \mathsf{L}_m$
in $\mathsf{P}$, the following conditions hold:
\begin{enumerate}
\item There exists at most one predicate constant $\mathsf{q}$ occurring in the body
     $\mathsf{L}_1, \ldots, \mathsf{L}_m$ such that $S(\mathsf{q}) = S(\mathsf{p})$.

\item If such a predicate $\mathsf{q}$ exists, it must appear in exactly one literal $\mathsf{L}_i$.
\end{enumerate}
The class of all Stratified Linear Higher-Order Datalog$^\neg$ programs is called
\emph{Stratified Linear Higher-Order Datalog}$^\neg$.
\end{definition}

This condition implies that a rule can access any number of atoms from lower strata - treating them
as already computed data - but can pass the active computation state to at most one literal in the current stratum.

%The following example, adopted from~\citep{iclp25}, illustrates the core features of the language and defines
%auxiliary predicates essential for our subsequent simulations. Notably, this example was originally developed in a
%different context, yet it fully complies with the restrictions of Stratified Linear Higher-Order Datalog$^\neg$.
%This underscores the naturalness of the proposed fragment.
%
We now give two examples of well-known problems that can be written very concisely in
Stratified Linear Higher-Order Datalog$^\neg$; this underscores the naturalness and usefulness
of the fragment.
\begin{example}\label{example1}
We define the relation
\lstinline`hamilton(X,Y)` which is true iff there exists a Hamilton path from vertex
\lstinline`X` to vertex \lstinline`Y` in a graph represented by a binary predicate
\lstinline`e` which specifies the edges of the graph.
The first rule in the definition of \lstinline`hamilton` is the following:
\begin{lstlisting}
hamilton(X,Y):-ordering(Ord),first(Ord,X),last(Ord,Y),subset(succ(Ord),e).
\end{lstlisting}
The above rule states that there exists a Hamilton path from vertex
\lstinline`X` to vertex \lstinline`Y` if there exists a relation \lstinline`Ord`
that is a strict total ordering with first element \lstinline`X` and last element
\lstinline`Y` and for every two consecutive elements in \lstinline`Ord` the
corresponding edge exists in \lstinline`e`. We use the predicate \lstinline`subset` of Example~\ref{subset-example} modified to handle binary relations. Notice the use of \lstinline`Ord` in
the above rule: it is a predicate variable that does not appear in the head of
the rule and therefore it is an existentially quantified variable of the body
(\ie the body can be read as ``there exists a relation \lstinline`Ord` such that
\ldots''). To be a strict total ordering, \lstinline`Ord` must be irreflexive,
transitive, and every two different elements must be related. This can be
expressed with the following rules:
\begin{lstlisting}
ordering(Ord):-connected(Ord),transitive(Ord),irreflexive(Ord).
connected(Ord):-not disconnected(Ord).
disconnected(Ord):-not Ord(X,Y),not Ord(Y,X),not(X=Y).
transitive(Ord):-not non_transitive(Ord).
non_transitive(Ord):-Ord(X,Y),Ord(Y,Z),not Ord(X,Z).
irreflexive(Ord):-not non_irreflexive(Ord).
non_irreflexive(Ord):-Ord(X,X).
\end{lstlisting}
%
%It is interesting to note above how we can implement
%universal quantification in the body of a rule: for example, to express the fact
%that \lstinline`Ord` is connected, \ie that \emph{for all} \lstinline`X`,
%\lstinline`Y` either \lstinline`X` is related to \lstinline`Y` or vice-versa, we
%just require that \lstinline`Ord` is not disconnected, \ie it is not the case
%that there exist \lstinline`X`, \lstinline`Y` that are not related. This is a
%common trick that we will use throughout the paper in order to represent
%universal quantification.

We now define the predicates \lstinline`first`, \lstinline`last` and \lstinline`succ`.
Predicate \lstinline`first(Ord,X)` is true for \lstinline`X` being the individual constant that
is the first element with respect to the ordering specified by \lstinline`Ord`.
Likewise, \lstinline`last(Ord,X)` is true if \lstinline`X` is the last element
in \lstinline`Ord`. The predicate \lstinline`succ(Ord,X,Y)` is true for \lstinline`X`
and \lstinline`Y` that are sequential in \lstinline`Ord`.
\begin{lstlisting}
first(Ord,X):-not nfirst(Ord,X).
nfirst(Ord,X):-Ord(Z,X).
last(Ord,X):-not nlast(Ord,X).
nlast(Ord,X):-Ord(X,Y).
succ(Ord,X,Y):-Ord(X,Y),not nsequential(Ord,X,Y).
nsequential(Ord,X,Y):-Ord(X,Z),Ord(Z,Y).
\end{lstlisting}
It is easy to check that this program is a Stratified Linear Higher-Order Datalog$^\neg$ one.
\hfill\hbox{\proofbox}
\end{example}
\begin{example}\label{graph_recoloring_example}
The \emph{Graph Recoloring} problem asks whether it is possible
to transform one valid vertex coloring of a graph into another valid target coloring by changing the color of
exactly one vertex at a time. A crucial constraint is that every intermediate
step in this sequence must also be a valid coloring (\ie no two adjacent
vertices may share the same color). Because the configuration space of all possible colorings for a graph is finite
but exponentially large, searching for a path between two colorings is an
archetypal reachability problem. \cite{Bonsma2009} prove that
deciding whether such a reconfiguration path exists between two colorings is a
\textsf{PSPACE}-complete problem.

The following program solves the Graph Recoloring problem. In this program, a
graph is defined by the predicates \lstinline`node` and \lstinline`edge`.
A specific coloring of a graph is represented as a binary relation \lstinline`C`,
where the atom \lstinline`C(X,Col)` indicates that node \lstinline`X`
is assigned color \lstinline`Col`. The target coloring is represented by
the relation \lstinline`target_coloring`.

Before we can determine if a graph can be recolored, we must first define what
constitutes a structurally sound, valid coloring. We do this by defining the
conditions that make a coloring invalid, and then concluding that a coloring is
valid if none of those conditions apply.
\begin{lstlisting}
has_color(C,X) :- C(X,Col).
invalid_coloring(C) :- edge(X,Y),C(X,Col),C(Y,Col).
invalid_coloring(C) :- C(X,Col1),C(X,Col2),not (Col1=Col2).
invalid_coloring(C) :- node(X), not has_color(C,X).
valid_coloring(C) :- not invalid_coloring(C).
\end{lstlisting}
The \lstinline`invalid_coloring` predicate captures the three ways a coloring attempt can fail:
(a) two distinct nodes connected by an edge share the exact same color;
(b) a single node is simultaneously assigned two different colors, and
(c) a node in the graph does not have any color assigned to it at all.

A valid recoloring sequence requires changing the color of exactly one node at a
time, ensuring that every intermediate state remains a valid graph coloring.
Finally, the program must check if this sequence of transitions successfully
reaches the \lstinline`target_coloring`.
\begin{lstlisting}
valid_step(C1,C2) :- valid_coloring(C1),valid_coloring(C2),one_diff(C1,C2).
recolorable(C) :- eq(C,target_coloring).
recolorable(C) :- valid_step(C,NextC),recolorable(NextC).
\end{lstlisting}
The \lstinline`valid_step(C1,C2)` enforces the rules
that both the starting coloring (\lstinline`C1`) and
resulting coloring (\lstinline`C2`) are valid colorings,
and exactly one node changes color. The latter is checked by the
predicate \lstinline`one_diff`. Both \lstinline`one_diff` and
the auxiliary predicate \lstinline`eq` that checks whether two binary
relations are equal can be defined as follows.
\begin{lstlisting}
diff(C1,C2,X) :- C1(X,Col), not C2(X,Col).
has_diff(C1,C2) :- diff(C1,C2,X).
multiple_diffs(C1,C2) :- diff(C1,C2,X),diff(C1,C2,Y), not (X=Y).
one_diff(C1,C2) :- has_diff(C1,C2),not multiple_diffs(C1,C2).
eq(C1,C2) :- not has_diff(C1,C2),not has_diff(C2,C1).
\end{lstlisting}
It is easy to verify that the above program is stratified:
every non-recursive predicate in the program can be assigned on its own stratum,
and the sole recursive predicate, \lstinline`recolorable`,
depends only positively on itself, therefore preserving this hierarchy.
Furthermore, the program maintains the linearity within every stratum.
The only interesting case is the recursive rule of predicate \lstinline`recolorable`,
where the body contains exactly one predicate from the same stratum
that is \lstinline`recolorable` itself. This predicate appears exactly once,
alongside \lstinline`valid_step` which resides in a lower stratum.\hfill\hbox{\proofbox}
\end{example}

Languages such as Higher-Order Datalog$^\neg$, are usually referred as
\emph{formal query languages}. A program in our language can be considered to
compute a query in the following sense: a first-order predicate, like
\lstinline`e` in Example~\ref{example1}, will be called an
\emph{input predicate} and its denotation (as a set of ground atoms) constitutes what is
called the \emph{input database}, usually denoted by $D_{in}$; a first-order
predicate like \lstinline`hamilton` in Example~\ref{example1}, will be an
\emph{output} one and its denotation constitutes the \emph{output database},
usually denoted by $D_{out}$.

Formally, a \emph{database schema} $\sigma$ is a finite set of first-order
predicate symbols with associated arities. A \emph{database} over a schema $\sigma$
is a finite set of ground atoms whose predicate symbols belong to $\sigma$.
A \emph{query} is a mapping from databases over a schema $\sigma_1$ to databases over a schema $\sigma_2$.
% It suffices to assume that $\sigma_2$ is singleton, \ie there is a single output predicate.
%
A program $\mathsf{P}$ can be seen as a query ${\cal Q}_{\mathsf{P}}$ such that
$D_{out} = {\cal Q}_{\mathsf{P}}(D_{in})$. We are interested in queries that are
\emph{generic}~\citep{Imm86}, \ie queries that do not depend on the names of the
individual constants in the input database.
% Notice that all these notions can be easily generalized to databases of an arbitrary
% but fixed set of predicate symbols of arbitrary but fixed arities~\citep{DBLP:journals/jcss/Schlipf95}.
Given a fragment of our language, we are interested in the \emph{expressive power}
of the fragment, namely the set of queries that can be defined by
programs of the fragment. We want to demonstrate that the
fragment \emph{captures a complexity class $\mathcal{C}$}, \ie it can express
\emph{exactly} all the queries whose \emph{evaluation complexity} belongs to
$\mathcal{C}$. Evaluation complexity is the complexity of checking
whether a given atom belongs to the output database.
%Notice that different
%semantics of the fragments we study may lead to different evaluation
%complexities and therefore to capturing different complexity classes. Therefore,
%our results will be of the form ``\emph{the fragment X of Higher-Order
%Datalog$^\neg$, under the Y semantics, captures the complexity class Z}''.

\section{Simulation of Space-bounded Turing Machines\label{simulation}}
In this section, we demonstrate how any query that belongs to  \EXPSPACE[(k-1)] ($k\geq 1$), can be
expressed by a $(k+1)$-order Stratified Linear Datalog$^\neg$ program.
Without loss of generality, we assume that the output schema of the query consists of
a single output predicate, since every query can be decomposed into multiple
queries of this form. By definition, since the query belongs to \EXPSPACE[(k-1)], there exists a
Turing machine that given on its tape an input database under some sensible encoding,
decides whether a tuple belongs to the output relation of
the query; furthermore, it does so using at most $\exp_{k-1}(n^d)$ tape cells, where $n$ is the number of constant
symbols in the input database and $d$ is some sufficiently large constant
(assume $\exp_{-1}(x) = \lceil \log x \rceil$).
We formalize this result as follows:
\begin{theorem}\label{lower-bound-theorem}
Every query in \EXPSPACE[(k-1)] ($k \geq 1$) can be expressed by a $(k+1)$-order
Stratified Linear Datalog$^\neg$ program.
\end{theorem}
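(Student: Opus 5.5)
We simulate a space-bounded Turing machine $M$ deciding the query, using $\exp_{k-1}(n^d)$ cells, by a $(k+1)$-order stratified linear program. There is no ordering assumption on the input, so the program starts by guessing one. Following Example~\ref{example1}, a top-level rule existentially quantifies a binary relation \texttt{Ord} and checks \texttt{ordering(Ord)}. All later predicates take \texttt{Ord} as a parameter, and an order-invariance argument shows that the answer does not depend on which ordering is guessed. This guessing needs only second-order variables, so it fits within order $k+1\geq 2$. From \texttt{Ord} we obtain \texttt{first}, \texttt{last}, \texttt{succ} on the domain, and lexicographically on $d$-tuples of constants. This gives a \emph{counting module} of level $0$: a numbering of $n^d$ positions with zero, last and successor predicates, all non-recursive and defined through negation from lower strata.

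\textbf{Counting modules.} The key construction is a tower of counting modules, built by induction on $j$. A level-$j$ module consists of predicates \texttt{zero}$_j$, \texttt{last}$_j$, \texttt{succ}$_j$, \texttt{less}$_j$ over objects of order $j$ that represent the numbers $0,\dots,\exp_j(n^d)-1$. A number at level $j+1$ is a relation $\mathsf{N}$ over level-$j$ numbers, read as a binary string of length $\exp_j(n^d)$. The predicates \texttt{zero}$_{j+1}$ and \texttt{last}$_{j+1}$ are defined by universal quantification via double negation, as in Example~\ref{subset-example}. The predicate \texttt{succ}$_{j+1}(\mathsf{N},\mathsf{M})$ is the usual binary increment: there is a position $i$ where $\mathsf{N}$ has $0$ and $\mathsf{M}$ has $1$, all positions below $i$ flip from $1$ to $0$, and all positions above $i$ agree. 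Each clause is written using \texttt{less}$_j$ and negated auxiliary predicates, with no recursion at all, so linearity holds trivially. Each level raises the order by one, so the level-$(k-1)$ numbers are objects of order $k-1$ that index $\exp_{k-1}(n^d)$ tape cells.

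\textbf{Configurations and execution.} A configuration is encoded by predicate variables of order $k$ ranging over level-$(k-1)$ numbers, such as \texttt{Tape(I,S)}, \texttt{Head(I)} and a state argument, or equivalently a single relation. Predicates over configurations therefore have order $k+1$. The predicates \texttt{initial(C)} (the input database encoded along the guessed order), \texttt{step(C1,C2)} and \texttt{accepting(C)} are again non-recursive: local consistency conditions over $M$'s finite transition table, with ``all other cells unchanged'' expressed through negation. Execution is then the single linear recursion, exactly as \texttt{recolorable} in Example~\ref{graph_recoloring_example}:
\begin{verbatim}
reach(C) :- accepting(C).
reach(C) :- step(C,C2), reach(C2).
\end{verbatim}
Together with \texttt{accept :- initial(C), reach(C)}, this gives the result. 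Nondeterminism is handled by the existential \texttt{C2}, and stratification is respected because \texttt{reach} depends only positively on itself. Output tuples are handled by passing the tuple as an extra argument that is written into the initial configuration. Since $M$ is a decider we may assume it always halts, so reachability of an accepting configuration is exactly acceptance.

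\textbf{Main obstacle.} The hardest step is making the counting modules and \texttt{step} simultaneously stratified, linear and within the order budget. Universal quantifiers are needed everywhere, and each must be pushed into a lower stratum via \texttt{not}. Every higher-order argument must also come from a lower stratum, which follows the third condition of Definition~\ref{stratified}. I would first fix the exact types so that the level-$(k-1)$ counters have order exactly $k-1$ as variables. Then I would check, stratum by stratum, that only \texttt{reach} is recursive. The case $k=1$ needs separate care: there $\exp_0(n^d)=n^d$ cells are indexed by $d$-tuples of individuals, configurations are first-order relations, and the program has order $2$, which yields \textsf{PSPACE}.
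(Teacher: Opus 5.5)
Your outline is correct, but it reaches the result by a different simulation from the paper's. The paper never existentially quantifies over configurations. It keeps one predicate $\mathtt{state}_s(\mathtt{Ord},\mathtt{Tp}_1,\mathtt{Tp}_0,\mathtt{P})$ per machine state and splits the tape into the two relations $\mathtt{Tp}_1,\mathtt{Tp}_0$. For each transition it writes one linear rule whose recursive call receives the \emph{updated} tape as the partial applications $\mathtt{flip}_1(\mathtt{Tp}_1,\mathtt{P})$ and $\mathtt{flip}_0(\mathtt{Tp}_0,\mathtt{P})$ of lower-stratum predicates, which the third stratification condition permits. The successor configuration is thus computed rather than guessed, so no frame conditions or well-formedness checks are needed, and the rules mirror the transition table one-to-one. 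Your guess-and-check \texttt{reach}/\texttt{step} scheme, modelled on the recoloring example, instead isolates all recursion in a single generic predicate and works for any one-step relation definable in lower strata. The price is that \texttt{step} must pin down $\mathtt{C2}$ completely: a unique head and all other cells equal, with equality of order-$(k-1)$ positions expressed through a defined $\mathtt{eq}$, since $\approx$ applies only to individuals. Likewise, your recursion-free counting modules (comparison via ``all higher bits agree'') replace the paper's linearly recursive $\mathtt{bit}_{k+1}$; both satisfy linearity.

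Two points in your sketch need tightening. First, for $k\geq 2$, \texttt{initial} is not obviously non-recursive. You must identify the level-$(k-1)$ number equal to the tuple position $num(x)+num(y)\cdot n$. The paper does this with the linearly recursive $\mathtt{lift}_{k-1}$ in a lower stratum; alternatively, you could guess the graph of this map as an order-$k$ relation and verify it with negation. Second, the query program has no individual constants of its own, so states and tape symbols cannot be constants as in $\mathtt{Tape}(\mathtt{I},\mathtt{S})$. Use one \texttt{reach} predicate per state and one tape relation per non-blank symbol, as the paper does.
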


To establish this theorem, we construct a simulation of an arbitrary
space-bounded Turing machine using Stratified Linear Higher-Order
Datalog$^\neg$. Our simulation strategy is to represent the tape as a function
that maps an address of a tape cell to a symbol of the alphabet of the machine. To uniquely
identify cells on an exponentially large tape, we require a mechanism to
represent and manipulate equally large natural numbers as tape indices. We
utilize higher-order variables to represent numbers; specifically, we construct
``counting modules'' where a predicate of order $k$ can represent a
number up to $\exp_{k}(n^{d+1})-1$ for some fixed $d$. We assume that the alphabet of the machine is the set of three
symbols $\{0, 1, \square\}$ where $\square$ denotes the blank symbol.
We decompose the tape into two disjoint
higher-order predicates. One predicate tracks the set of addresses where the
tape bit is set to $1$, while the other tracks addresses where the bit is set to
$0$. Any address not present in either set is implicitly treated as containing
the $\square$ symbol.

\subsection{Representation of Numbers}

Indexing an exponentially large tape requires a mechanism for representing
exponentially large natural numbers. We adopt the representation scheme
described by~\cite{iclp25} where predicates of order $k$ are used to represent
numbers of magnitude $\exp_{k}(n^d)$. As we will demonstrate later, this
representation constitutes a Stratified Linear Higher-Order Datalog program.

% \paragraph{Representing Polynomially-Big Numbers:}
Natural numbers up to $n^{d+1}-1$ are represented using tuples of individual constants
with a fixed length of $d+1$, where $d$ is an arbitrary but fixed natural number.
The following predicates define the ``first'' and ``last'' of such
numbers (denoted by \lstinline`first$_0$` and \lstinline`last$_0$`) and the
``less-than'' and ``successor'' relations on them (denoted by \lstinline`lt$_0$`
and \lstinline`succ$_0$`). Notice that the following definitions rely on the
\lstinline`first` and \lstinline`last` predicates defined in Example~\ref{example1}.
\begin{lstlisting}
first$_0$(Ord,X$_0$,...,X$_d$):-first(Ord,X$_0$),...,first(Ord,X$_d$).
last$_0$(Ord,X$_0$,...,X$_d$):-last(Ord,X$_0$),...,last(Ord,X$_d$).
lt$_0$(Ord,X$_0$,...,X$_d$,Y$_0$,...,Y$_d$):-Ord(X$_d$,Y$_d$).
lt$_0$(Ord,X$_0$,...,X$_d$,Y$_0$,...,Y$_d$):-Ord(X$_{d-1}$,Y$_{d-1}$),(X$_d$=Y$_d$).
$\ldots$
lt$_0$(Ord,X$_0$,...,X$_d$,Y$_0$,...,Y$_d$):-Ord(X$_0$,Y$_0$),(X$_1$=Y$_1$),...,(X$_d$=Y$_d$).
succ$_0$(Ord,$\bar{\tt X}$,$\bar{\tt Y}$):-lt$_0$(Ord,$\bar{\tt X}$,$\bar{\tt Y}$),not nsequential$_0$(Ord,$\bar{\tt X}$,$\bar{\tt Y}$).
nsequential$_0$(Ord,$\bar{\tt X}$,$\bar{\tt Y}$):-lt$_0$(Ord,$\bar{\tt X}$,$\bar{\tt Z}$),lt$_0$(Ord,$\bar{\tt Z}$,$\bar{\tt Y}$).
\end{lstlisting}

% \paragraph{Representing Exponentially-Big Numbers:}
% We now demonstrate how we can represent ``exponentially-big'' numbers as
% higher-order relations.
% In the foregoing discussion we will need the following
% notation: $\expk{0}(x) = x$ and $\expk{n+1}(x) = 2^{\expk{n}(x)}$.
We extend this system to represent ``exponentially-big'' numbers using higher-order relations.
Let $N_0 = n^{d+1} - 1$ be the largest number that can be represented by
$(d+1)$-tuples of individual constants and for $k \geq 1$, let $N_k$ be the
largest number that can be represented by using $k$-order relations. We can
exponentially increase the numbers up to the number $N_{k+1} = \exp_1(N_k +
1)-1$ by using $(k+1)$-order relations. Generally, it holds that $N_{k} =
\exp_{k}(n^{d+1})-1$.

If the $k$-order relations representing numbers up to $N_k$ are of type $\rho$,
then it suffices to use higher-order relations of type $\rho \rightarrow o$ in
order to represent numbers up to $N_{k+1}$. This is essentially a binary
representation where the lower order numbers denote bit positions. Formally, let
\lstinline`Z` be a $(k+1)$-order element and ${\tt R}_0,\ldots,{\tt R}_{N_k}$ be
the ordering of the elements that represent numbers in the previous counting
module. Let $f$ be the function mapping $\mtrue$ to $1$ and $\mfalse$ to $0$.
Then, we have $num({\tt Z}) = f({\tt Z(R}_{0})) + f({\tt Z(R}_{1})) \cdot 2 +
\cdots + f({\tt Z(R}_{N_k}))\cdot 2^{N_k}$. We begin with predicates testing for
the first and the last number.
\begin{lstlisting}
first$_{k+1}$(Ord,N):-not nfirst$_{k+1}$(Ord,N).
nfirst$_{k+1}$(Ord,N):-N(X).
last$_{k+1}$(Ord,N):-not nlast$_{k+1}$(Ord,N).
nlast$_{k+1}$(Ord,N):-not N(X).
\end{lstlisting}

The following definitions describe the ``less than'' relation between two
elements that represent numbers. We examine if a number \lstinline`N` is less
than \lstinline`M` by comparing the two numbers bit by bit in their binary
representation. The successor of a number is defined with the use of less-than.
\begin{lstlisting}
lt$_{k+1}$(Ord,N,M):-last$_{k}$(Ord,X),bit$_{k+1}$(Ord,N,M,X).
bit$_{k+1}$(Ord,N,M,X):-not N(X),M(X).
bit$_{k+1}$(Ord,N,M,X):-N(X),M(X),succ$_{k}$(Ord,Y,X),bit$_{k+1}$(Ord,N,M,Y).
bit$_{k+1}$(Ord,N,M,X):-not N(X),not M(X),succ$_{k}$(Ord,Y,X),bit$_{k+1}$(Ord,N,M,Y).
succ$_{k+1}$(Ord,N,M):-lt$_{k+1}$(Ord,N,M), not nsequential$_{k+1}$(Ord,N,M).
nsequential$_{k+1}$(Ord,N,M):-lt$_{k+1}$(Ord,N,Z),lt$_{k+1}$(Ord,Z,M).
\end{lstlisting}
For the case where $k=0$ the variables \lstinline`X` and \lstinline`Y`
in the code above must be substituted with the tuple vectors $\bar{\tt X}$ and $\bar{\tt Y}$.

It is straightforward to verify that the counting modules of any order $k\geq 0$ constitutes
a stratified linear program.
First, notice that there is a strict dependency on the order of the predicates:
predicates of order $k+1$ depend either on other predicates of order $k+1$ or on predicates of strictly lower order.
Consequently, all predicates of order $k$ can be assigned to strata strictly lower than those
of order $k+1$. Second, within the specific order $k+1$ we
impose a stratification function $S$ as:
\[S(\mathtt{succ}_{k+1})>S(\mathtt{nsequential}_{k+1})>S(\mathtt{lt}_{k+1})>S(\mathtt{bit}_{k+1})\]
Predicates such as \lstinline`first$_{k+1}$` and \lstinline`last$_{k+1}$` can be placed
at the lowest stratum of the $(k+1)$-order predicates. Finally, regarding linearity,
the only recursive definition at order $k+1$ is the \lstinline`bit$_{k+1}$`.
By inspection of its rules, we observe that it depends on exactly one
literal of itself, satisfying the linearity condition.

\subsection{The simulation of the Turing machine}

We now demonstrate how any query that belongs to  \EXPSPACE[(k-1)] ($k\geq 1$), can be
expressed by a stratified linear $(k+1)$-order Datalog$^\neg$ program. The simulation
is markedly different from that of~\cite{iclp25} which was easier since it could utilize
the full power of unrestricted Higher-Order Datalog$^\neg$.

% \paragraph{Encoding the input:}
%
Before presenting the simulation of the Turing machine $M$, we mention certain
simplifying assumptions, which do not affect the generality of the subsequent
results.
\begin{itemize}
  \item The input database consists of a single binary relation \lstinline|in|
        and the output database is also a single binary relation
        \lstinline|out|. In the following, the number of constants in the input
        database is denoted by $n$.
  \item The alphabet of $M$ that will be simulated is $\Sigma =\{0,1,\Box\}$.
        $M$ expects the input relation \lstinline|in| as the standard binary
        encoding of a graph, which is based on the ordering of the individual
        constants, in the first $n^2$ cells of its tape. For example, if the
        pair $(x,y)$ belongs to \lstinline|in|, then the tape of $M$ contains a
        ``1'' at cell position $num(x)+num(y) \cdot n$, otherwise it contains
        ``0''.
  \item $M$ decides whether a tuple $(a,b)$ belongs to the output relation
        \lstinline|out|. The next $n^2$ cells of its tape are used to encode
        $(a,b)$. All these cells contain the symbol ``0'', except for the cell
        at position $num(a)+num(b) \cdot n+n^2$ which contains ``1''.
  \item $M$ reaches its accepting state $yes$ if and only if the tuple $(a,b)$ belongs to the output relation
        \lstinline|out|.

\end{itemize}
We assume a Turing machine $M$ of three symbols $\Sigma =\{0,1,\Box\}$. By
convention, we use two predicates \lstinline|input$_{0}$| and
\lstinline|input$_{1}$| to encode the input tape of $M$. Specifically,
\lstinline|input$_{0}$| holds at position \lstinline|$\bar{\tt X}$| iff the
tape contains ``0'' at \lstinline|$\bar{\tt X}$|, \lstinline|input$_{1}$|
holds at \lstinline|$\bar{\tt X}$| iff the tape contains ``1'' at
\lstinline|$\bar{\tt X}$|, and the tape contains the blank symbol
``$\Box$'' iff neither holds. By construction below, the two
predicates are disjoint.
% Notice that since this encoding is priority based there are multiple states of
% these two predicates that correspond to the same tape, but that is not an issue
% for the simulation.

The following two predicates encode the binary input relation \lstinline|in|
and the tuple $(a,b)$
as a binary string.
In the following rules, we pad the binary input relation
with fixed constants to form $(d+1)$-arity tuples representing valid tape addresses.
\begin{lstlisting}
input$_{0}$(A,B,Ord,X,Y,Z$_2$,...,Z$_d$):-first(Ord,Z$_2$),...,first(Ord,Z$_d$),not in(X,Y).
input$_{0}$(A,B,Ord,Z$_0$,Z$_1$,Z$_2$,...,Z$_d$):-not(Z$_0$=A),first(Ord,Z),succ(Ord,Z,Z$_2$),
                               first(Ord,Z$_3$),...,first(Ord,Z$_d$).
input$_{0}$(A,B,Ord,Z$_0$,Z$_1$,Z$_2$,...,Z$_d$):-not(Z$_1$=B),first(Ord,Z),succ(Ord,Z,Z$_2$),
                               first(Ord,Z$_3$),...,first(Ord,Z$_d$).
input$_{1}$(A,B,Ord,X,Y,Z$_2$,...,Z$_d$):-first(Ord,Z$_2$),...,first(Ord,Z$_d$),in(X,Y).
input$_{1}$(A,B,Ord,Z$_0$,Z$_1$,Z$_2$,...,Z$_d$):-(Z$_0$=A),(Z$_1$=B),first(Ord,Z),succ(Ord,Z,Z$_2$),
                                first(Ord,Z$_3$),...,first(Ord,Z$_d$).
\end{lstlisting}

% \paragraph{Initial configuration of the Turing Machine}
% In order to represent the configurations of the Turing machine we use a
% higher-order predicate for each state.

We also need a higher-order predicate to lift the tuple representation of
numbers to the $(k-1)$-order numbering notation in order to transfer the input of
the tape from a zero order relation to a $(k-1)$-order relation. The predicate
\lstinline|lift$_{k-1}$(Ord,$\bar{\tt X}$,M)| transforms the number represented
by the tuple \lstinline|$\bar{\tt X}$| in the zero-order notation to the same
number \lstinline`M` in the $(k-1)$-order notation.
\begin{lstlisting}
lift$_{k-1}$(Ord,$\bar{\tt X}$,M):-first$_0$(Ord,$\bar{\tt X}$),first$_{k-1}$(Ord,M).
lift$_{k-1}$(Ord,$\bar{\tt X}$,M):-succ$_0$(Ord,$\bar{\tt Z}$,$\bar{\tt X}$),lift$_{k-1}$(Ord,$\bar{\tt Z}$,M'),succ$_{k-1}$(Ord,M',M).
\end{lstlisting}

We then define the two predicates \lstinline|symbol$_0$| and \lstinline|symbol$_1$|
that are the $(k-1)$-order analogues of the input predicates \lstinline|input$_0$| and
\lstinline|input$_1$|.
\begin{lstlisting}
symbol$_{0}$(A,B,Ord,P):-input$_{0}$(A,B,Ord,$\bar{\tt X}$),lift$_{k-1}$(Ord,$\bar{\tt X}$,P).
symbol$_{1}$(A,B,Ord,P):-input$_{1}$(A,B,Ord,$\bar{\tt X}$),lift$_{k-1}$(Ord,$\bar{\tt X}$,P).
\end{lstlisting}

% \paragraph{Encoding of Machine's Transitions}
With the input prepared, we model the Turing machine's execution as a
reachability analysis on the graph of configurations. We adopt a forward
approach: a configuration is valid if it leads to an accepting state.

Consider the transition $(s,\texttt{0})\longrightarrow (s',\texttt{1},\text{R})$ which reads:
``if the current state of the machine is $s$ and its cursor reads the symbol
``0'', then the machine changes its state to $s'$, it writes ``1'' on the
current cursor position onto the tape and then moves the cursor to the right.''
We encode this logic such that the validity of the current state $s$ depends on the validity of the successor state $s'$.
%
% We write the following rule to show that these two states are connected in the states-graph.
%
\begin{lstlisting}
state$_{s}$(Ord,Tp$_1$,Tp$_0$,P):-Tp$_0$(P),not Tp$_1$(P),succ$_{k-1}$(Ord,P,P'),
                       state$_{s'}$(Ord,flip$_1$(Tp$_1$,P),flip$_0$(Tp$_0$,P),P').
\end{lstlisting}
This rule asserts that if the machine successfully accepts starting from state
$s'$, it also accepts starting from state $s$.
The successor state $s'$ reflects the updated tape and the cursor moved to
the right or left by enforcing \lstinline`succ$_{k-1}$(Ord,P,P')` or
\lstinline`succ$_{k-1}$(Ord,P',P)` respectively.

The transition rules above rely on the ability to modify the tape.
Recall that the tape is encoded by the two disjoint relations
\lstinline`Tp$_1$` and \lstinline`Tp$_0$`, holding the positions where the
tape contains ``1'' and ``0'' respectively. Writing the symbol ``$b$'' at
position \lstinline`P` therefore involves updating \emph{both} relations:
\lstinline`P` is added to \lstinline`Tp$_b$` and removed from the other;
e.g., writing ``1'' at \lstinline`P` puts \lstinline`P` into
\lstinline`Tp$_1$` and removes it from \lstinline`Tp$_0$`.
We capture these two elementary set operations with the helper predicates
\lstinline`flip$_1$` and \lstinline`flip$_0$`, defined as follows:
\begin{lstlisting}
flip$_1$(Tp,P,X):-eq(P,X).
flip$_1$(Tp,P,X):-Tp(X).
flip$_0$(Tp,P,X):-Tp(X),neq(P,X).
\end{lstlisting}
Intuitively, the partial application \lstinline`flip$_1$(Tp,P)` denotes the
relation $\mathtt{Tp}\cup\{\mathtt{P}\}$, i.e., \lstinline`Tp` extended so
that \lstinline`P` belongs to it, while \lstinline`flip$_0$(Tp,P)` denotes
$\mathtt{Tp}\setminus\{\mathtt{P}\}$, i.e., \lstinline`Tp` with \lstinline`P`
removed. Writing the symbol ``1'' at position \lstinline`P` is then
accomplished by passing \lstinline`flip$_1$(Tp$_1$,P)` and
\lstinline`flip$_0$(Tp$_0$,P)` to the recursive call, as in the transition
rule above; writing ``0'' is symmetric, with the roles of the two relations
exchanged.
Both definitions rely on the auxiliary predicates \lstinline`eq` and \lstinline`neq`,
which check whether two relations (of the same type) are equal or not and are defined as:
\begin{lstlisting}
neq(R,Q):-R(X),not Q(X).
neq(R,Q):-not R(X),Q(X).
eq(R,Q):-not neq(R,Q).
\end{lstlisting}

% \paragraph{Acceptance and Query Execution}
The recursion defined in the transition rules must eventually terminate.
We ground the simulation by defining the accepting state ``yes'' as true
for any valid tape configuration:
\begin{lstlisting}
state$_{yes}$(Ord,Tp$_1$,Tp$_0$,P):-not intersect(Tp$_1$,Tp$_0$).
intersect(R,Q):-R(X),Q(X).
\end{lstlisting}

Finally, we start the simulation by querying the initial configuration.
The following query checks if the initial state $s_0$
(with the initial cursor position and input symbols)
can effectively ``reach'' the accepting state $\mathit{yes}$ via the rules defined above:
\begin{lstlisting}
query(A,B):-ordering(Ord),first$_{k-1}$(Ord,P),
            state$_{s_0}$(Ord,symbol$_1$(A,B,Ord),symbol$_0$(A,B,Ord),P).
\end{lstlisting}
This means that the \lstinline`query` is true if the \lstinline`state$_{s_0}$` is
true for the initial tape state and with the cursor being at the beginning of
the tape which in turn is true if this configuration is reachable by
one accepting configuration.

We conclude the construction by verifying that the simulation is
a stratified linear program. The core of the simulation lies in the transition rules
for the \lstinline`state$_s$` predicates which form a single stratum of mutually recursive definitions.
Crucially, each rule depends recursively on exactly one \lstinline`state$_s$` predicate,
satisfying the linearity constraint. All auxiliary predicates, such as
\lstinline`succ$_{k+1}$`, \lstinline`flip$_0$` and \lstinline`flip$_1$` are
defined without dependence on any \lstinline`state$_s$` predicate allowing them to be
placed on lower strata.

\section{Space-Efficient Computation of Stratified Linear Higher-Order Programs\label{computation}}
Using a simple-minded bottom-up approach to compute the answer to a query, would
require space tower-exponential in the order of the given program. To capture the precise
space complexity of Stratified Linear Higher-Order Datalog$^\neg$, we must avoid
producing and storing the entire model of a program.
In this section, we present a space-efficient computation that proceeds in a
top-down fashion. We demonstrate that for a $(k+1)$-order Stratified Linear Datalog$^\neg$
program, a query $p(\bar{d})$ can be evaluated using memory bounded by
$\EXPSPACE[(k-1)]$. We formalize this result in the following theorem:

\begin{theorem}\label{upper-bound-theorem}
Let $\Prog$ be a Stratified Linear $(k+1)$-Order Datalog$^\neg$ program
that defines a query. Then, there exists a deterministic Turing
machine that takes as input an encoding of a database $D$
that uses $n$ individual constant symbols and
a ground atom $p(\bar{a})$, where $p$ is a predicate constant of $\Prog$ and
$\bar{a}$ is a tuple of these individual constants, and
decides whether $p(\bar{a}) \in \mathcal{Q}_\mathsf{P}(D)$, while using at most
$\exp_{k-1}(n^d)$ tape cells, for some constant $d$.
\end{theorem}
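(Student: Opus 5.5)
We will give a recursive, top-down decision procedure $\mathit{eval}(\mathsf{E})$ for ground atoms $\mathsf{E}$ whose arguments are closed expressions built from program constants, individual constants, and partial applications. It proceeds by induction on the strata of $\Prog$. The first observation concerns arguments. An argument of type $\rho$ with $\mathit{order}(\rho)\le k-1$ is, extensionally, an element of the finite domain $\llbracket\rho\rrbracket$ over the $n$ constants. A relation of order $j$ can be written as a truth table of size at most $\exp_{j}(n^{c})$, where $c$ is a constant depending only on the arities occurring in $\Prog$. Since all variables in a $(k+1)$-order program have order at most $k-1$, every variable binding fits in $\exp_{k-1}(n^{c})$ cells. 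We will therefore keep every argument in \emph{normalized} form, as an explicit truth table. Normalizing a partial application such as $\mathsf{flip}_1(\mathsf{Tp},\mathsf{P})$ means computing, for each tuple in the domain, a lower-stratum query. This will be done by recursive calls.

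The procedure follows the stratification $S$. To decide $\mathsf{p}\,\overline{\mathsf{v}}$, where the $\overline{\mathsf{v}}$ are normalized values, we nondeterministically pick a rule and enumerate the existential body variables. Each enumeration is a counter over at most $\exp_{k-1}(n^c)$ cells. We then check every body literal whose predicate constants lie in strictly lower strata, including negated literals and literals applying a variable to arguments, by a recursive (deterministic) call, and we flip the answer for negation. The third stratification condition guarantees that higher-order arguments built from constants of the current stratum never occur. Hence normalizing them only needs calls to lower strata, and the value of a variable applied to arguments is a table lookup. By linearity, at most one literal $\mathsf{q}\,\overline{\mathsf{w}}$ with $S(\mathsf{q})=S(\mathsf{p})$ remains, and it occurs positively. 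Instead of recursing on it, we \emph{replace} the current goal by $\mathsf{q}\,\overline{\mathsf{w}}$ after normalizing its arguments. Within a stratum the computation is thus a walk on a graph whose nodes are ground atoms with normalized arguments. Each node has size $\exp_{k-1}(n^{c})$, and there are at most $\exp_{k}(n^{c'})$ of them. Positive recursion with the least model semantics means that $\mathsf{p}\,\overline{\mathsf{v}}$ holds iff an accepting node is reachable, where an accepting node is one with a rule having no same-stratum literal whose lower-stratum body succeeds. This is exactly the classical argument that Linear Datalog lies in $\mathsf{NL}$, lifted by exponentials. We obtain a nondeterministic algorithm using space $O(\exp_{k-1}(n^{c}))$ per stratum, and we cut off walks after $\exp_k(n^{c'})$ steps using a step counter of the same size. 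Savitch's theorem, or Immerman–Szelepcsényi for the negated calls, then makes it deterministic within $\exp_{k-1}(n^{c})^2\le\exp_{k-1}(n^{2c})$ for $k\ge2$. For $k=1$ these closure results give deterministic $\mathsf{L}$-style nesting, namely $\mathsf{NL}\subseteq$ polynomial space $=\exp_0(n^d)$.

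For the overall space bound, the recursion depth across strata is bounded by the number of strata plus the nesting depth of argument expressions. Both are constants of $\Prog$, independent of $D$. Each level of the call stack stores a constant number of normalized values, counters and a current goal, each of size $\exp_{k-1}(n^{c})$. The total is therefore a constant times $\exp_{k-1}(n^{2c})$, which is at most $\exp_{k-1}(n^{d})$ for suitable $d$. Correctness is proved by induction on strata. We use the fact from \cite{iclp24} that stratified programs have a unique stable/well-founded model, which is computed stratum by stratum. Within a stratum, the model of a linear stratum is the least fixpoint of a positive linear operator over lower strata, taken as fixed. An atom is in this least fixpoint iff there is a finite derivation chain, which is exactly a path in the goal graph.

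The main obstacle is the argument-normalization step, together with the justification that it is sound. Extensionality of the semantics must ensure that replacing an expression such as $\mathsf{symbol}_1(\mathsf{A},\mathsf{B},\mathsf{Ord})$ by its truth table does not change truth values. Normalization must also never require a call into the current stratum, since that would break the constant-depth stack. I would first try to establish it by a lemma: for each stratum, the denotation of any argument expression occurring in a rule depends only on strictly lower strata, by the third condition of Definition~\ref{stratified}. Its table can then be computed entry by entry by lower-stratum calls.
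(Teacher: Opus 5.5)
Your proposal is essentially the paper's proof. Both build a top-down query machine per stratum that guesses a rule and a body assignment. Lower-stratum literals are discharged through evaluators that, by the third condition of Definition~\ref{stratified}, never call back into the current stratum. The goal is overwritten with the unique same-stratum literal that linearity provides, computation paths are cut off with a step counter of $\exp_{k-1}(\mathrm{poly}(n))$ bits, and the machine is made deterministic by Savitch. Your correctness argument (an atom is in the least model of a stratum, with the lower model fixed, iff it is reachable along derivation chains) is what the paper proves with its splitting-set and reduct lemmas, rather than citing it from earlier work.

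\textbf{Order bookkeeping needs correcting.} In a $(k+1)$-order program, variables and predicate arguments have order up to $k$, not $k-1$; your own example $\mathtt{Tp}$ in $\mathtt{flip}_1(\mathtt{Tp},\mathtt{P})$ has order $k$. What keeps you within budget is that a value of order $j\geq 1$ has a truth table with at most $\exp_{j-1}(n^c)$ entries, not $\exp_j(n^c)$. Your two off-by-one errors cancel, so the bound of $\exp_{k-1}(n^c)$ cells per stored value is right, but the justification as written is false. In particular, for $k=1$ the goals contain polynomial-size first-order relations, so the per-stratum walk is an \textsf{NPSPACE} computation made deterministic by Savitch, not an \textsf{NL}-style one.

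\textbf{Normalization is not a real obstacle.} The step you flag as the main difficulty is immediate, because the semantics is compositional and extensional. Replacing an argument expression by its truth table therefore cannot change any truth value. The third stratification condition already ensures that computing that table only needs lower strata.
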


%\subsection{Evaluation Strategy}
%
Our evaluation strategy relies on the stratified and linear nature of the program.
The meaning of a stratified $(k+1)$-order Datalog program can be computed stratum by stratum,
starting from the lowest and moving upwards.
To evaluate queries in the $(m+1)$-th stratum, we employ two routines:
\begin{itemize}
\item A query machine for the $(m+1)$-th stratum that decides the truth of an atom $(\mathsf{p}\,d_1\cdots d_n)$
      where $\mathsf{p}$ is a predicate belonging to the $(m+1)$-th stratum and the $d_i$ are
      \emph{representations of expression values} (see explanation later in the section).
      For the initial query, the $d_i$ are just individual constants.
\item An expression evaluator for the $m$-th stratum that, given a variable assignment,
      computes the value of a given expression $\mathsf{E}$ of a type with order at most $k$, with the
      restriction that $\mathsf{E}$ is formed using predicates solely from the first $m$ strata.
\end{itemize}
The query machine for the $(m+1)$-th stratum requires that the corresponding expression evaluator
for the $m$-th stratum is already defined since it invokes it. On the other hand, the evaluator routine
for the $m$-th stratum requires the query routine for any stratum $m' \leq m$
to be defined.
%Therefore, the construction assumes that we build each routine iteratively starting
%from the first stratum and moving upwards.
If we take the first stratum to be the  set of the (first-order) input predicates supplied from the
database in the form of facts, then for $m=0$ these routines are easily defined and thus omitted.

We now define the above routines at a high level of abstraction. A more formal presentation and the corresponding
correctness proofs, are given in the
\ifincludeappendix Appendix~\ref{appendix_computation} \else supplementary material \fi.

\vspace{0.2cm}

\noindent
{\bf Query Machine for the $(m+1)$-th Stratum}:
\begin{enumerate}
\item Start with the atom $(\mathsf{p}\,d_1\cdots d_n)$ written on the tape, where the $d_i$ are
      representations of expression values, \ie a string that describes the value (for example, a relation
      can be represented as a string listing all tuples in some predetermined order).
\item\label{select-step} Non-deterministically select a rule
      $\mathsf{p}\ \mathsf{R}_1 \cdots \mathsf{R}_n \lrule \mathsf{L}_1, \dots, \mathsf{L}_r$
      from $\Prog$ for predicate $\mathsf{p}$ and guess a valid assignment for the rule's variables such that
      $d_i$ is assigned to $\mathsf{R}_i$ for all $i \in \{ 1,\dots, n\}$.
\item Iterate through the literals $\mathsf{L}_i$ in the rule body:
\begin{itemize}
  \item If $\mathsf{L}_i$ is a negative literal then $\mathsf{L}_i$ is of the form $({\tt not}\ \mathsf{E})$.
        Invoke the expression evaluator for the $m$-th stratum, passing as parameter the guessed
        variable assignment, to compute $\mathsf{E}$.
        If the evaluation returns $\mtrue$, then reject, otherwise proceed to the next literal.
  \item If $\mathsf{L}_i$ is a positive literal and there is no predicate in $\mathsf{L}_i$ from the $(m+1)$-th stratum,
        then invoke the expression evaluator for the $m$-th stratum, passing as parameter the guessed
        variable assignment, to compute $\mathsf{L}_i$.
        If the evaluation returns $\mtrue$, then proceed to the next literal, otherwise reject.
  \item If there is a predicate in $\mathsf{L}_i$ from the $(m+1)$-th stratum, then this is the unique
        recursive literal $\mathsf{q}\ \mathsf{E}_1 \cdots \mathsf{E}_t$ implied by linearity.
        Every $\mathsf{E}_j$ is of order at most $k$ and includes predicates from
        the first $m$ strata. Iterate through $\mathsf{E}_1,\dots,\mathsf{E}_t$ invoking each time the evaluator
        for the $m$-th stratum, passing as parameter the guessed
        variable assignment, to compute each $  \mathsf{E}_j$ and store the representations $d'_1,\dots,d'_t$.
        Construct a new ground atom $\mathsf{q}\,d'_1 \cdots d'_t$.
\end{itemize}
\item If all literals $\mathsf{L}_i$ evaluate to $\mtrue$ without encountering a recursive literal,
      then accept.
\item If one of the literals $\mathsf{L}_i$ is a recursive one, overwrite the current goal with $\mathsf{q}\,d'_1 \cdots d'_t$
      and loop back to Step~\ref{select-step}.
\end{enumerate}
Note that the aforementioned query machine is non-deterministic and may contain
computational paths that do not terminate (\eg consider the rule
\lstinline`p:-p`). This machine is not technically a decider of whether
$(\mathsf{p}\,d_1\cdots d_n)$ is true or not. However, for machines that are
space bounded (more specifically by a space constructible function
$f(n)$~\citep{Sipser}) there exists a corresponding non-deterministic Turing
machine where every computational path terminates. By~\cite{Savitch1970}, there
also exists a deterministic machine that computes the same query within the same
space complexity class. In the following when we refer to the query machine we
mean the deterministic version.

We now define the expression evaluator for the $m$-th stratum. Let
$\mathsf{E}$ be an expression of order at most $k$ that includes predicates from
the first $m$ strata. The expression evaluator makes calls to query machines in
order to construct a representation of the expression's value.

%This is achieved by successively calling the query machines to fully construct a canonical
%representation of the expression's value.
%(Notice that since $\mathsf{E}$ is an expression of
%$k+1$-order Datalog each constant in it has an order of at most $k+1$ and each variable of at most $k$.)

\vspace{0.2cm}

\noindent
{\bf Expression Evaluator for the $m$-th Stratum}:
\begin{itemize}
\item If the expression $\mathsf{E}$ is an individual constant,  then return the constant.
      %the evaluation is immediate:
      %the result is simply the constant itself stored in our predetermined encoding for the $\iota$ types in memory.
\item If the expression $\mathsf{E}$ is of the form $(\mathsf{q}\ \mathsf{E}_1\cdots\mathsf{E}_t)$
      where $\mathsf{q}$ belongs to the $m'$-th stratum, $m' \leq m$, and its type is
      $\rho_1 \to \dots \to \rho_t \to  \cdots \to \rho_{t'} \to o$. Then:
      \begin{enumerate}
        \item Evaluate the expressions $\mathsf{E}_1, \dots, \mathsf{E}_t$ sequentially
              to obtain and store their representations $d_1,\dots,d_t$. This is performed
              by calling the evaluator recursively for each subexpression.
        \item To determine the representation of the result, the machine enumerates all possible
              tuples  $d_{t+1}, \dots, d_{t'}$ of representations regarding the remaining arguments.
              For each tuple, it invokes the query machine for the $m'$-th stratum with the goal
              $(\mathsf{q}\, d_1 \cdots d_{t}\, d_{t+1} \cdots d_{t'})$.
      \end{enumerate}
\item If the expression $\mathsf{E}$ is of the form $(\mathsf{X}\ \mathsf{E}_1\cdots\mathsf{E}_t)$
      where $\mathsf{X}$ is a variable (of order at most $k$) then it evaluates the expressions $\mathsf{E}_1, \dots, \mathsf{E}_t$
      to obtain and store the representation of their values $d_1,\dots,d_t$, and then uses
      the variable assignment of $\mathsf{X}$ to compute the representation of the value of
      the application.
\end{itemize}

The last two bullets implicitly handle the cases where there are zero arguments.

\vspace{0.2cm}

We now briefly comment on the space complexity; the full and precise arguments are given in the
\ifincludeappendix Appendix~\ref{appendix_computation} \else supplementary material \fi.
The query machine's space complexity depends on storing the current goal, guessing variable assignments, and evaluating lower-strata subroutines. For a $(k+1)$-order program, arguments are at most order $k$, belonging to a domain of size $\exp_k(poly(n))$. Representing these arguments requires unique identifiers of logarithmic size, \ie $\exp_{k-1}(poly(n))$ bits. Therefore, storing the current goal and variable assignments takes $\exp_{k-1}(poly(n))$ space. Crucially, program linearity allows the machine to overwrite the current goal during recursion, preventing depth-related memory blow-up. Furthermore, lower-strata expressions are evaluated sequentially (reusing memory), and since the number of strata is a fixed constant independent of the input size $n$, the call stack depth is strictly bounded. Consequently, the overall space complexity remains $\exp_{k-1}(poly(n))$.

\section{Future Work: Implementation of 2nd-Order Stratified Linear Datalog$^\neg$\label{implementation}}
The space complexity results described in this paper are not merely theoretical;
they provide the foundation for a practical implementation of the 2nd order fragment
of Stratified Linear Datalog$^\neg$. By capturing \textsf{PSPACE}, this fragment addresses
many problems that remain out of reach for traditional ASP systems.

The key idea behind our implementation is to compile any given 2nd order Stratified Linear Datalog$^\neg$
program and user query into a Quantified Boolean Formula (QBF). We can then use existing, efficient QBF
solvers to evaluate the resulting formula and return answers to the initial query. Because evaluating QBFs
is the canonical \textsf{PSPACE}-complete problem~\citep{Sipser}, this translation offers a direct and highly
feasible approach to implementing our source language.

Due to space limitations, the remainder of this section provides only a high-level overview of this
translation, leaving the formal generalization of this technique to a future paper.
\ifincludeappendix
To better understand the mechanics of this approach, interested readers should consult Appendix~\ref{recoloring-qbf-eval},
which describes the steps of transforming a well-known \textsf{PSPACE}-complete problem written in 2nd
order Stratified Linear Datalog$^\neg$, into QBF.
\fi

The space-efficient computation described in Section~\ref{computation} models evaluation as a
top-down procedure that avoids materializing the full program interpretation. We
can conceptualize this evaluation as a reachability problem within a
configuration graph. In the first-order setting, linearity implies that the
evaluation engine only needs to store the current state of the computation
(which is a classical Datalog atom). In the 2nd-order case, the active computation
is, in general, a first-order relation. For an input database with a domain of size $n$, a
first-order relation of arity $d$ can be represented by a truth assignment to $O(n^d)$ boolean
variables. Therefore, a single recursive step in a 2nd-order
Stratified Linear Datalog$^\neg$ program represents a transition from one boolean
configuration to another. Because the program is linear, meaning there exists at
most one predicate constant in the body from the same stratum, the core query
evaluation simply asks whether there is a valid logical path from an initial
configuration to an accepting configuration.

The transformation to QBF relies on encoding these configurations as boolean
vectors and capturing the linear recursion using a reachability formula. Let a
variable $\mathsf{X}$ representing a first-order relation be encoded as a vector of
boolean variables $\vec{x}$ of length $n^d$. The linear recursive rules of the current stratum,
along with any non-recursive rules or sequences of rules from strictly lower strata they depend on,
can be compiled into a quantifier-free boolean formula $T(\vec{x}, \vec{y})$.
This formula evaluates to true if and
only if state $\vec{y}$ can be derived from state $\vec{x}$ in exactly one step.
Because the state space of possible configurations has a size of $2^{n^d}$, a linear
query might require up to $2^{n^d}$ recursive steps to reach the target state. Unrolling
the transition formula $T(\vec{x}, \vec{y})$ exponentially would require exponential space,
which contradicts our \textsf{PSPACE} bounds. Fortunately, we can follow the principles
of Savitch's theorem and use the expressive power of universal quantifiers in
QBF to compress the path search.
\ifincludeappendix
A detailed explanation of how Savitch's theorem is used for this compression, is given in Appendix~\ref{recoloring-qbf-eval}.
\fi

This translation mechanism offers a practical evaluation pipeline. A compiler
can map all first-order variables to boolean vectors, compile the linear
recursive rules into the base transition formula $T$, and generate appropriate
quantified reachability formulas based on Savitch's theorem. The resulting formula
can then be evaluated by modern QBF solvers. By avoiding the
materialization of the full interpretation, this approach strictly adheres to
the space-efficient nature of our fragment and provides a viable
implementation pathway for 2nd order Stratified Linear Datalog$^\neg$ programs.

As another promising direction for future work, it is worth investigating whether we can relax the stratification
constraints by introducing a notion of local stratification for Higher-Order
Datalog$^\neg$. Although our current space complexity arguments rely on a fixed number
of strata, we conjecture that if the number of strata are polynomial with
respect to the input size, the resulting linear programs would still preserve
their space efficiency.
% \begin{itemize}
% \item Stratified linear programs and stratified programs in general have a unique two-valued model.
%       One point of direction is to investigate stratified linear programs with choices.
%       We conjecture that the complexity will not change since $\EXPSPACE[k]=k-\mathsf{NEXPSPACE}$ for $k \geq 1$.
% \end{itemize}

%\vspace{-0.3cm}

\bibliography{expressivity}

\ifincludeappendix
\clearpage
\appendix
% \section{The Semantics of Higher-Order Datalog$^\neg$}
\section{Proofs of Section~\ref{computation}}
\label{semantics}\label{appendix_computation}

In this appendix, we supply the technical details omitted from the main text
regarding the evaluation of Stratified Linear Higher-Order Datalog$^\neg$. 
We define the precise semantics of the language and provide the
proofs for the upper bound complexity characterized in Theorem~\ref{upper-bound-theorem}.

The semantics of the base type $\bool$ is the classical Boolean domain
$\{\mathit{true}, \mathit{false}\}$ and that of the base type $\basedom$ is
$U_{\mathsf{P}}$, namely the set of individual constant symbols in \Prog.
% The semantics of types of the form $\rho \to \pi$ is the set
% of all functions from the domain of type $\rho$ to that of type $\pi$. 
% We define, simultaneously with the meaning of every type, a partial order on the
% elements of the type.
We start by defining the meanings of types of Higher-Order Datalog${^\neg}$.

\begin{definition}\label{def:orders_two-valued}
Let $\mathsf{P}$ be a Higher-Order Datalog${^\neg}$ program. We define the meaning $\mo{\tau}_{U_{\mathsf{P}}}$ 
of a type $\tau$ with respect to $U_{\mathsf{P}}$, as follows:
\begin{itemize}
  \item $\mo{\bool}_{U_{\mathsf{P}}} = \{\mathit{true}, \mathit{false}\}$.
  The partial order $\leq_\bool$ is the one induced by $\mfalse <_\bool \mtrue$.

  \item $\mo{\basedom}_{U_{\mathsf{P}}} = U_{\mathsf{P}}$.
        The partial order $\leq_\basedom$ is the
        trivial one defined as $d \leq_\basedom d$ for all $d \in U_{\mathsf{P}}$.
  \item $\mo{\rho \to \pi}_{U_{\mathsf{P}}} = \mo{\rho}_{U_{\mathsf{P}}} \to \mo{\pi}_{U_{\mathsf{P}}}$,
        namely the set of all functions from  $\mo{\rho}_{U_{\mathsf{P}}}$ to $\mo{\pi}_{U_{\mathsf{P}}}$.
        The partial order $\leq_{\rho \to \pi}$ is defined as follows:
        for all $f,g \in \mo{\rho \to \pi}_{U_{\mathsf{P}}}$,
        $f \leq_{\rho \to \pi} g$ iff $f(d) \leq_{\pi} g(d)$ for all $d \in \mo{\rho}_{U_{\mathsf{P}}}$.
\end{itemize}
\end{definition}

The subscripts from the above partial order will be omitted when they are
obvious from context. Moreover, we will omit the subscript $U_{\mathsf{P}}$
assuming that our semantics is defined with respect to a specific program
$\mathsf{P}$.
For every predicate type $\pi$, $(\mo{\pi}, \leq_\pi)$ is a complete lattice. 
We denote by $\bigvee_{\leq_{\pi}}$ and
$\bigwedge_{\leq_{\pi}}$ the corresponding lub and glb operations of the above
lattice.

To denote that an expression $\mathsf{E}$ has type $\rho$ we will often write $\mathsf{E}:\rho$.

\begin{definition}\label{def:interpretation_Herbrand}
A Herbrand interpretation $I$
of a program $\mathsf{P}$ assigns to each
individual constant $\mathsf{c}$ of $\mathsf{P}$, the element
$I(\mathsf{c}) = \mathsf{c}$, and to each predicate constant
$\mathsf{p} : \pi$ of $\mathsf{P}$, an element $I(\mathsf{p}) \in \mo{\pi}$.
% An interpretation is called two-valued if for every $\mathsf{p} : \pi$ of $\mathsf{P}$, $I(\mathsf{p}) \in \mo{\pi}$.
%
\end{definition}

We will denote the set of Herbrand interpretations of a program $\mathsf{P}$
with $H_\mathsf{P}$.
We define a partial order on $H_\mathsf{P}$ as
follows: for all $I, J \in H_\mathsf{P}$, $I \leq J$
iff for every predicate constant $\mathsf{p} : \pi$ that appears in
$\mathsf{P}$, $I(\mathsf{p}) \aleq[\pi] J(\mathsf{p})$.

\begin{definition}\label{def:state_Herbrand}
A \emph{Herbrand state} $s$ of a program $\mathsf{P}$ is a function that
assigns to each argument variable $\mathsf{R}$ of type $\rho$, an element
$s(\mathsf{R}) \in \mo{\rho}$.  We denote the set of Herbrand states with
$S_\mathsf{P}$.
\end{definition}
In the following, $s[\mathsf{R}_1/d_1,\ldots,\mathsf{R}_n/d_n]$ is used to
denote a state that is identical to $s$ the only difference being that the new
state assigns to each $\mathsf{R}_i$ the corresponding value $d_i$; for brevity,
we will also denote it by $s[\bar{\mathsf{R}}/\bar{d}]$.

\begin{definition}\label{def-semantics}
Let $\mathsf{P}$ be a Higher-Order Datalog${^\neg}$ program, $I$ a Herbrand
interpretation of $\mathsf{P}$, and $s$ a Herbrand state.
The semantics of expressions, literals and bodies is defined as follows:
\begin{enumerate}
  \item $\mwrs{\mathsf{R}}{I}{s} = s(\mathsf{R})$
  \item $\mwrs{\mathsf{c}}{I}{s} = I(\mathsf{c}) = \mathsf{c}$
  \item $\mwrs{\mathsf{p}}{I}{s} = I(\mathsf{p})$
  \item $\mwrs{(\mathsf{E}_1\ \mathsf{E}_2)}{I}{s} = \mwrs{\mathsf{E}_1}{I}{s}\ \mwrs{\mathsf{E}_2}{I}{s}$
  \item $\mwrs{(\mathsf{E}_1\approx \mathsf{E}_2)}{I}{s} = \begin{cases}
    \mathit{true},  & \text{if } \mwrs{\mathsf{E}_1}{I}{s} = \mwrs{\mathsf{E}_2}{I}{s} \\
    \mathit{false}, & \text{otherwise}
    \end{cases}$
  \item $\mwrs{(\mathsf{not}\ \mathsf{E})}{I}{s} = (\mwrs{\mathsf{E}}{I}{s})^{-1}$, with $\mathit{true}^{-1}\!=\!\mathit{false}$, $\mathit{false}^{-1}\!=\!\mathit{true}$
  \item $\mwrs{(\mathsf{E}_1 \wedge \cdots \wedge \mathsf{E}_m)}{I}{s} =
    \bigwedge_{\leq_\bool}\{\mwrs{\mathsf{E}_1}{I}{s},\ldots,\mwrs{\mathsf{E}_m}{I}{s}\}$
\end{enumerate}
\end{definition}

\begin{definition}\label{def:model}
Let $\mathsf{P}$ be a program and $M$ be a Herbrand
interpretation of $\mathsf{P}$. Then, $M$ is a
\emph{Herbrand model} of $\mathsf{P}$ iff for every rule
$\mathsf{p}\ \overline{\mathsf{R}} \lrule \mathsf{B}$ in $\mathsf{P}$
and for every Herbrand state $s$,
$\mwrs{\mathsf{B}}{M}{s} \leq_\bool \mwrs{\mathsf{p}\ \bar{\mathsf{R}}}{M}{s}$.
\end{definition}

% When viewing elements of $\mos{\pi}$ as partial sets and elements of $\mo{\pi}$
% as sets, the isomorphism maps a partial set onto the pair with first
% component all the \emph{certain} elements of the partial set (those mapped to
% $\mtrue$) and second component all the \emph{possible} elements (those mapped
% to $\mtrue$ or $\mundef$).

%
Let $(L,\leq)$ be a complete lattice. We define $L^{c} =\{(x,y) \in L \times L \mid x \leq y\}$.
Moreover, we define the relations $\leq$ and  $\preceq$ on $L^{c}$, so that
for all $(x,y),(x',y') \in L^c$:
%
% \begin{itemize}
$(x, y) \leq (x', y')$ iff $x \leq x'$ and $y \leq y'$, and
$(x, y) \preceq (x', y')$ iff $x \leq x'$ and $y' \leq y$.
% \end{itemize}
%

We will denote the \emph{first} and \emph{second} selection
functions on pairs with the more compact notation $[\cdot]_1$ and $[\cdot]_2$:
given any pair $(x,y)$, it is $[(x,y)]_1 = x$ and $[(x,y)]_2 = y$.
It is easy to see that $L^{c}$ is a complete lattice with respect to $\leq$ where $\bigvee_{\leq}$
and $\bigwedge_{\leq}$ are defined in a pointwise way and $L^{c}$ is a meet-semilattice with respect to $\preceq$
where $\bigwedge_{\preceq} S = (\bigwedge\{ [x]_1 \mid x \in S \}, \bigvee \{ [x]_2 \mid x \in S \})$.

The following definition defines an alternative but equivalent semantics of the expressions, literals and bodies
with respect to a pair of interpretations. This is necessary for proceeding to
the definition of the approximating operator and more convenient when doing proofs. This operator is 
$\prec$-monotonic.

\begin{definition}\label{def:pair-semantics}
Let $\mathsf{P}$ be a program, $(I, J) \in H^c_\mathsf{P}$, and $s \in S_{\mathsf{P}}$.
The pair semantics of expressions, literals and bodies is defined as follows:
\begin{enumerate}
  \item $\mwrc{\mathsf{R}}{I,J}{s} = (s(\mathsf{R}),s(\mathsf{R}))$
  \item $\mwrc{\mathsf{c}}{I,J}{s} = (I(\mathsf{c}), J(\mathsf{c}))$
  \item $\mwrc{\mathsf{p}}{I,J}{s} = (I(\mathsf{p}), J(\mathsf{p}))$
  \item $\mwrc{(\mathsf{E}_1\ \mathsf{E}_2)}{I,J}{s} = (\bigwedge_{\leq_\pi} \{f(d) \mid d \in \lsem \rho\rsem, l \leq d \leq u \}, \bigvee_{\leq_\pi} \{g(d) \mid d \in \lsem \rho\rsem, l \leq d \leq u\})$,
      where $(f,g) = \mwrc{\mathsf{E}_1}{I,J}{s}$, $(l,u) = \mwrc{\mathsf{E}_2}{I,J}{s}$ for $\mathsf{E}_1\! :\! \rho \to \pi$ and $\mathsf{E}_2\! :\! \rho$.
  \item $\mwrc{(\mathsf{E}_1\approx \mathsf{E}_2)}{I,J}{s} = \begin{cases}
    (\mathit{true}, \mathit{true}),  & \text{if } \mwrc{\mathsf{E}_1}{I,J}{s} = \mwrc{\mathsf{E}_2}{I,J}{s} \\
    (\mathit{false}, \mathit{false}), & \text{otherwise}
    \end{cases}$
  \item $\mwrc{(\mathsf{not}\ \mathsf{E})}{I,J}{s} =  \mwrc{\mathsf{E}}{I,J}{s}^{-1}$, with $(\mathit{true},\mathit{true})^{-1}\!=\!(\mathit{false},\mathit{false})$, $(\mathit{false},\mathit{false})^{-1}\!=\!(\mathit{true},\mathit{true})$
  and $(\mathit{false},\mathit{true})^{-1}\!=\!(\mathit{false},\mathit{true})$
  \item $\mwrc{(\mathsf{E}_1 \wedge \cdots \wedge \mathsf{E}_m)}{I,J}{s} =
  \bigwedge_{\leq_\bool}\{\mwrc{\mathsf{E}_1}{I,J}{s},\ldots,\mwrc{\mathsf{E}_m}{I,J}{s}\}$
\end{enumerate}
\end{definition}

\begin{definition}\label{def:pair-ap}
Let $\mathsf{P}$ be a program. The \emph{approximating operator} (or approximator)
$\ATP : H^c_\mathsf{P} \to H^c_\mathsf{P}$ is defined
for every predicate constant $\mathsf{p} : \rho_1 \to \cdots \to \rho_n \to \bool$ in $\mathsf{P}$ and
all $d_1 \in \mo{\rho_1},\ldots, d_n \in \mo{\rho_n}$, as $\ATP(I,J) = (\ATP(I,J)_1, \ATP(I,J)_2)$
where, for $i \in \{ 1, 2 \}$:
\[ \ATP(I,J)_i(\mathsf{p}) (\bar{d})= \bigvee\nolimits_{\leq_\bool}\{
  [\mwrc{\mathsf{B}}{I,J}{s[\bar{\mathsf{R}}/\bar{d}]}]_i \mid
          \mbox{$s\in S_{\mathsf{P}}$ and
                $(\mathsf{p}\ \bar{\mathsf{R}} \lrule \mathsf{B})$ in $\mathsf{P}$}\}
\]
\end{definition}

Several fixpoints of the approximating operator are of interest~\citep{iclp24}.
In this section, we focus on the stable fixpoints of $\ATP$: a pair $(I,J)$
is a stable fixpoint of $\ATP$ if and only if $I = \lfp \ATP(\cdot,J)_1$ and $J = \lfp \ATP(I,\cdot)_2$.

\begin{definition}\label{def:AFTsemantics}
Let $\mathsf{P}$ be a Higher-Order Datalog${^\neg}$ program.
% and  $\ATP(I,J) = \tau(\mathcal{T}_\mathsf{P}(\tau^{-1}(I,J)))$.
We call $I$ a \emph{stable model} of $\mathsf{P}$ if $(I,I)$ is a stable fixpoint of $\ATP$.
\end{definition}

We can decompose the program into subprograms and study the stable models of
each subprogram. The following definition formalizes this decomposition through
the notion of a splitting set, which identifies a sub-program that can be
evaluated independently of the remaining rules.

% \section{Proofs of Section~\ref{computation}}
% \label{appendix_computation}

\begin{definition}\label{def_splitting}
Let $\mathsf{P}$ be a program and $U$ a set of predicate constants.
We say that $U$ is a splitting set of $\mathsf{P}$
if for every clause $C$ of $\mathsf{P}$,
if a predicate constant of $U$ appears in the head of $C$, then every predicate constant appearing in $C$ is included in $U$.
The set of clauses $C\in \mathsf{P}$ such that all predicate constants appearing in $C$ are included in $U$
is called the \emph{bottom} of $\mathsf{P}$ relative to $U$ and denoted by $b_U(\mathsf{P})$.
The set $\mathsf{P} \setminus b_U(\mathsf{P})$ is called the \emph{top} of $\mathsf{P}$ relative to $U$.
\end{definition}

\begin{lemma}\label{splitting-set-models} Let $\mathsf{P}$ be a program and $U$
be a splitting set of $\mathsf{P}$ such that $b_U(\mathsf{P})$ contains all
constants of type $\iota$ of $\mathsf{P}$. If $M$ is a stable model of
$\mathsf{P}$, then $M$ restricted to $U$ is a stable model of
$b_U(\mathsf{P})$.
\end{lemma}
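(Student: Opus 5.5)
We prove the lemma directly from the stable fixpoint characterization: $(M,M)$ is a stable fixpoint of $\ATP$ iff $M = \lfp \ATP(\cdot,M)_1$ and $M = \lfp \ATP(M,\cdot)_2$. Write $\mathsf{P}_b = b_U(\mathsf{P})$, let $M|_U$ denote the restriction of $M$ to the predicate constants of $U$, and let $\mathcal{A}_{\mathsf{P}_b}$ be the approximator of $\mathsf{P}_b$. The goal is
\[
M|_U = \lfp \mathcal{A}_{\mathsf{P}_b}(\cdot,M|_U)_1
\quad\text{and}\quad
M|_U = \lfp \mathcal{A}_{\mathsf{P}_b}(M|_U,\cdot)_2 .
\]

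\textbf{Locality of the pair semantics.} The hypothesis that $\mathsf{P}_b$ contains all individual constants of $\mathsf{P}$ makes $U_{\mathsf{P}_b}=U_{\mathsf{P}}$. Hence the type domains $\mo{\rho}$, and therefore the Herbrand states, coincide for the two programs. By induction on the structure of expressions (Definition~\ref{def:pair-semantics}), the value $\mwrc{\mathsf{E}}{I,J}{s}$ depends only on $s$ and on the values of $I,J$ at predicate constants occurring in $\mathsf{E}$. Consequently, for every $\mathsf{p}\in U$, the splitting-set condition gives
\[
\ATP(I,J)_i(\mathsf{p}) = \mathcal{A}_{\mathsf{P}_b}(I|_U,J|_U)_i(\mathsf{p}) .
\]
The condition applies because the rules of $\mathsf{P}$ with head in $U$ are exactly the rules of $\mathsf{P}_b$ defining $\mathsf{p}$, and they mention only constants of $U$.

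\textbf{Fixpoints restrict.} Since $\ATP(\cdot,M)_1$ maps $M$ to itself, the locality identity shows that $M|_U$ is a fixpoint of $\mathcal{A}_{\mathsf{P}_b}(\cdot,M|_U)_1$. Likewise, $M|_U$ is a fixpoint of $\mathcal{A}_{\mathsf{P}_b}(M|_U,\cdot)_2$.

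\textbf{Leastness.} This is the main obstacle. Let $K \le M|_U$ be any fixpoint of $\mathcal{A}_{\mathsf{P}_b}(\cdot,M|_U)_1$; restricting to $K\le M|_U$ is harmless, because the least fixpoint is the meet of prefixpoints. Extend $K$ to an interpretation $\hat K$ of $\mathsf{P}$ that equals $K$ on $U$ and equals $M$ outside $U$.
\begin{itemize}
\item On $U$, the locality identity gives $\ATP(\hat K,M)_1 = K = \hat K$.
\item Outside $U$, monotonicity of $\ATP(\cdot,M)_1$ in its first argument, together with $\hat K \le M$, gives $\ATP(\hat K,M)_1 \le \ATP(M,M)_1 = M = \hat K$.
\end{itemize}
So $\hat K$ is a prefixpoint of the monotone operator $\ATP(\cdot,M)_1$. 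By Knaster--Tarski, $M = \lfp \le \hat K$, hence $M|_U \le K$. Thus $M|_U$ is the least fixpoint of $\mathcal{A}_{\mathsf{P}_b}(\cdot,M|_U)_1$.

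The second component is handled symmetrically: extend a fixpoint $K \le M|_U$ of $\mathcal{A}_{\mathsf{P}_b}(M|_U,\cdot)_2$ by $M$ outside $U$, and use monotonicity of $\ATP(M,\cdot)_2$ in its second argument. One point needs care: these monotonicity claims require $\ATP(\cdot,J)_1$ and $\ATP(I,\cdot)_2$ to be $\le$-monotone in the free argument. This is the standard consequence of $\preceq$-monotonicity of the approximator, which the paper states, and I would invoke it as such. With both equalities established, $(M|_U,M|_U)$ is a stable fixpoint of $\mathcal{A}_{\mathsf{P}_b}$, i.e.\ $M|_U$ is a stable model of $\mathsf{P}_b$.
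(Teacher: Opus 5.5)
Your proposal is correct and follows the paper's proof essentially step for step. Both use locality of the approximator on the splitting set, restriction of the fixpoint to $U$, and leastness by extending a fixpoint below $M|_U$ with $M$ outside $U$, which gives a prefixpoint of the full first-component operator (on $U$ by locality, off $U$ by monotonicity); the paper argues by contradiction and treats an arbitrary stable fixpoint $(I,J)$ rather than $(M,M)$, and your ``symmetric'' second-component argument is merely vacuous, since that operator only acts on pairs $(M,Y)$ with $M\le Y$, so the only candidate $K\le M|_U$ is $M|_U$ itself.
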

\begin{proof}
% Let $\ATP$ be the approximator of $\Prog$ and $(I,J)$ a pair interpretation for \Prog. 
% For any predicate $\mathsf{p}$ in $\Prog$ we have
% \[ \ATP(I,J)_i(\mathsf{p}) (\bar{d})= \bigvee\nolimits_{\leq_\bool}\{
%   [\mwrc{\mathsf{B}}{I,J}{s[\bar{\mathsf{R}}/\bar{d}]}]_i \mid
%   \mbox{$s\in S_{\mathsf{P}}$ and
%     $(\mathsf{p}\ \bar{\mathsf{R}} \lrule \mathsf{B})$ in $\mathsf{P}$}\}
% \]
% %
% Similarly, let $A_{b_U(\Prog)}$ be the approximator of $b_U(\Prog)$. For
% $(I_U,J_U)$ a pair interpretation for $b_U(\Prog)$ which is defined for any predicate
% $\mathsf{p}$ in $U$ as
% \[ A_{b_U(\Prog)}(I_U,J_U)_i(\mathsf{p}) (\bar{d})= \bigvee\nolimits_{\leq_\bool}\{
%   [\mwrc{\mathsf{B}}{I_U,J_U}{s[\bar{\mathsf{R}}/\bar{d}]}]_i \mid
%   \mbox{$s\in S_{b_U(\Prog)}$ and
%     $(\mathsf{p}\ \bar{\mathsf{R}} \lrule \mathsf{B})$ in ${b_U(\Prog)}$}\}
% \]
%
Let the pair $(I,J)$ be a stable fixpoint for $\Prog$ and let
$(I_U,J_U)$ be its restriction to the bottom program ${b_U(\Prog)}$.
We will show that $(I_U,J_U)$ is a stable fixpoint of ${b_U(\Prog)}$ which directly implies the statement.
It is easy to verify that $(I_U,J_U)$ is a fixpoint of the approximator $A_{b_U(\Prog)}$. 
Therefore, $A_{b_U(\Prog)}(X,J_U)_1$ and $A_{b_U(\Prog)}(I_U,Y)_2$ are well-defined
operators in the corresponding intervals. To see this is true, take for example 
any  interpretation $X \leq J_U$. Then we
have that $A_{b_U(\Prog)}(X,J_U)_1 \leq A_{b_U(\Prog)}(J_U,J_U)_1 \leq
A_{b_U(\Prog)}(J_U,J_U)_2 \leq A_{b_U(\Prog)}(I_U,J_U)_2 = J_U$. Similarly, for $I_U \leq Y$.

For the sake of contradiction assume that $(I_U,J_U)$ is not a stable fixpoint of 
${b_U(\Prog)}$. Then, let $I'_U = \lfp A_{b_U(\Prog)}(\cdot,J_U)_1$ and so $I'_U < I_U$.
We consider an interpretation $I'$ defined as: 
\[
  I'(\mathsf{p}) = \begin{cases} 
      I'_U(\mathsf{p}) & \text{if $\mathsf{p} \in U$} \\
      I(\mathsf{p}) & \text{otherwise}
  \end{cases}
\]
It is easy to verify that $I' < I$.
We will show that $I'$ is a prefixpoint of $\ATP(\cdot,J)_1$ which is a contradiction
since $I$ is the least fixpoint (and therefore its least prefixpoint). 
Indeed, for any $\mathsf{p}$ in $U$ we have that:
\begin{align*}
  \ATP(I',J)_1(\mathsf{p}) (\bar{d})
    & = \bigvee\nolimits_{\leq_\bool}\{
  [\mwrc{\mathsf{B}}{I',J}{s[\bar{\mathsf{R}}/\bar{d}]}]_1 \mid \mbox{$s\in S_{\mathsf{P}}$ and
  $(\mathsf{p}\ \bar{\mathsf{R}} \lrule \mathsf{B})$ in $\mathsf{P}$}\}                                      \\
    & = \bigvee\nolimits_{\leq_\bool}\{
  [\mwrc{\mathsf{B}}{I'_U,J_U}{s_u[\bar{\mathsf{R}}/\bar{d}]}]_1 \mid \mbox{$s_u\in S_{b_U(\Prog)}$ and
  $(\mathsf{p}\ \bar{\mathsf{R}} \lrule \mathsf{B})$ in ${b_U(\Prog)}$}\}                                    \\
    & = A_{b_U(\Prog)}(I'_U,J_U)_1(\mathsf{p}) (\bar{d})= I'_U(\mathsf{p}) (\bar{d}) = I'(\mathsf{p}) (\bar{d})
\end{align*}
This follows from the fact that all rules defining the predicate $\mathsf{p}$
are in $b_U(\mathsf{P})$ and that the bodies of these rules
involve only predicates interpreted by $(I_U, J_U)$.
% Furthermore, we rely on the fact that the Herbrand universe is the same in $\mathsf{P}$ and 
% $b_U(\mathsf{P})$ and {\bf that every valid state
% of $\Prog$ restricted is a valid state of ${b_U(\Prog)}$ and every state $s_U$ of
% ${b_U(\Prog)}$ when extended is a state of $\Prog$.}

For any $\mathsf{p}$ not in $U$ it follows by the monotonicity of $\ATP$ that 
$\ATP(I',J)_1(\mathsf{p}) \leq \ATP(I,J)_1(\mathsf{p})=I(\mathsf{p})$. Furthermore, 
since $I(\mathsf{p}) = I'(\mathsf{p})$
it follows $\ATP(I',J)_1(\mathsf{p}) \leq I'(\mathsf{p})$.

The case for $J_U$ is analogous.
\end{proof}

It is easy to see that for any program $\mathsf{P}$ that is stratified with a
stratification function $S$, for every natural number $n$, 
the set $U_n = \{ \mathsf{p} \mid S(\mathsf{p}) < n \}$ is a splitting set of $\mathsf{P}$.
Using these splitting sets, we can divide $\mathsf{P}$ into a finite number of
disjoint subprograms $\mathsf{P}_1, \ldots, \mathsf{P}_m$. In the rest of the
section we will denote a stratified program $\mathsf{P}$ as 
$\mathsf{P}_1 \cup \dots \cup \mathsf{P}_m$.
A key property of such programs is that they possess a unique stable model.

\begin{lemma}\label{stratified-has-unique-stable}
Let $\mathsf{P}$ be a stratified Higher-Order Datalog$^{\neg}$ program. Then, it has a unique stable model.
\end{lemma}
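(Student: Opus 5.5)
We argue by induction on the number of strata, using the decomposition $\mathsf{P} = \mathsf{P}_1 \cup \dots \cup \mathsf{P}_m$ induced by the splitting sets $U_n = \{\mathsf{p} \mid S(\mathsf{p}) < n\}$. Both existence and uniqueness are proved simultaneously. The induction hypothesis is that $\mathsf{Q}_j = \mathsf{P}_1 \cup \dots \cup \mathsf{P}_j$ has exactly one stable model $M_j$.

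\emph{Uniqueness.} Suppose $M$ is a stable model of $\mathsf{Q}_{j+1}$. By Lemma~\ref{splitting-set-models}, applied to the splitting set $U_{j}$, the restriction of $M$ to the predicates of $\mathsf{Q}_j$ is a stable model of $\mathsf{Q}_j$. By the induction hypothesis, this restriction equals $M_j$. Hence any stable model of $\mathsf{Q}_{j+1}$ extends $M_j$.

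\emph{Evaluating the top stratum.} Fix the lower predicates to $M_j$ in both components of the pair $(I,J)$. The key claim is that for every rule of $\mathsf{P}_{j+1}$, the pair semantics of its body depends monotonically, in the $\leq$ order, on the interpretation of the stratum-$(j+1)$ predicates, and depends only on $M_j$ for everything else. This uses the three conditions of Definition~\ref{stratified}:
\begin{itemize}
\item predicates of the current stratum never occur under ${\tt not}$;
\item predicates of the current stratum never occur in argument position $\mathsf{E}_2$ of an application;
\item therefore they occur only as heads of applications in positive literals.
\end{itemize}
Here the head of an application acts monotonically: if $f \le f'$ pointwise, then $f(d) \le f'(d)$. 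Arguments and negated subexpressions mention only lower predicates, whose interpretation is the exact pair $(M_j, M_j)$. So those subterms evaluate to exact pairs $(x,x)$, and negation and the interval-application clause behave classically.

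Consequently, on interpretations that agree with $M_j$ below, the operator $\ATP(\cdot,J)_1$ restricted to the current stratum does not depend on $J$. The same holds for $\ATP(I,\cdot)_2$ with respect to $I$. Both coincide with a single monotone two-valued immediate-consequence operator $T$, and $T$ has a least fixpoint $L$ on the complete lattice of interpretations of the stratum-$(j+1)$ predicates.

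\emph{Existence and conclusion.} Define $M_{j+1} = M_j \cup L$. We check that $(M_{j+1}, M_{j+1})$ is a stable fixpoint, i.e.\ $M_{j+1} = \lfp \ATP(\cdot, M_{j+1})_1$, and symmetrically for the second component. The least fixpoint of $\ATP(\cdot, M_{j+1})_1$ restricts below to a least fixpoint of the bottom operator, which is $M_j$ because it is the unique stable model. Above, it must then be $\lfp T = L$. Conversely, any stable model restricts to $M_j$, so its top part is forced to be $\lfp T$. This gives uniqueness. The base case $j=0$, or a stratum consisting of input facts, is trivial.

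\emph{Main obstacle.} The hardest step is justifying the ``independence of the opposite component'' claim inside the pair semantics for higher-order application. There the clause quantifies over all $d$ between $l$ and $u$. I will handle it by a lemma proved by structural induction on expressions: if every predicate constant in $\mathsf{E}$ has stratum at most $j$ and the lower part of $(I,J)$ equals $(M_j,M_j)$, then $\mwrc{\mathsf{E}}{I,J}{s}$ is exact. This collapses the interval to the single point $l = u$, after which monotonicity in the head position is immediate.
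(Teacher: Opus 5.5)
Your argument is correct, but it takes a different route from the paper. The paper proves this lemma in one line by appealing to Theorems~7.3 and~7.4 of~\cite{iclp24}, which show that every stratified program has a unique stable model coinciding with its well-founded model. You instead give a self-contained proof within the appendix's own machinery. The splitting-set lemma plus the induction hypothesis force every stable model of $\mathsf{P}_1\cup\cdots\cup\mathsf{P}_{j+1}$ to extend $M_j$. Your exactness lemma then says that expressions built only from lower-stratum constants evaluate to exact pairs once the lower part is $(M_j,M_j)$. Because of the second and third stratification conditions, this makes both $A_{\mathsf{P}}(\cdot,J)_1$ and $A_{\mathsf{P}}(I,\cdot)_2$ collapse on the top stratum to a single monotone two-valued operator $T$.

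The citation buys brevity and, for free, the coincidence with the well-founded semantics. Your proof buys self-containment and an explicit iterated-least-fixpoint description $M_{j+1}=M_j\cup\mathrm{lfp}\,T$, i.e.\ the top stratum of the stable model is the least model over $M_j$. That is exactly the minimality of $M_m$ among models extending $M_{m-1}$ which the paper later uses without proof when relating $M_m$ to the minimum model of the reduct $\mathsf{P}_m^{M_{m-1}}$. So your version also fills that hole.

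When you write it out, a few small points need attention.
\begin{itemize}
\item State explicitly the ``lower part first'' fact you rely on. If a monotone operator's restriction to the lower predicates ignores the upper ones, its least fixpoint is obtained by first taking the least fixpoint of the lower operator, then that of the upper operator with the lower part frozen. The same fact even gives the restriction property directly, without the splitting-set lemma.
\item The lower part comes out as $M_j$ because $(M_j,M_j)$ is a stable fixpoint, not because $M_j$ is unique.
\item The least fixpoints of $T$ are really taken on intervals such as $[\bot,L]$ and $[L,\top]$ (and, in the uniqueness step, below the top part of the given stable model). This is harmless because the relevant endpoint is itself a fixpoint of $T$.
\item With the paper's convention $U_n=\{\mathsf{p}\mid S(\mathsf{p})<n\}$, the splitting set for $\mathsf{P}_1\cup\cdots\cup\mathsf{P}_j$ is $U_{j+1}$ rather than $U_j$.
\end{itemize}
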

\begin{proof}
An immediate consequence of Theorems~7.3 and~7.4 from~\cite{iclp24}.
\end{proof}

In the following, whenever we decompose a program $\mathsf{P}$ into $\mathsf{P}_1 \cup \dots \cup \mathsf{P}_m$,
we will silently assume that every $\iota$ type constant appearing in $\mathsf{P}$ appears also in $\mathsf{P}_1 $ and therefore 
in every $\mathsf{P}_1 \cup \dots \cup \mathsf{P}_i$ for any $i \leq m$. 
This ensures that the Herbrand universe of every subprogram is the same as the Herbrand universe of $\mathsf{P}$ and so all types evaluate
from the same sets in each subprogram.
\begin{definition}\label{reduct}
Let $\mathsf{P} = \mathsf{P}_1 \cup \dots \cup \mathsf{P}_m$ 
be a stratified Higher-Order Datalog$^{\neg}$ program, $U_m$ be the splitting set of the predicates 
defined in $\mathsf{P}_1 \cup \dots \cup \mathsf{P}_{m-1}$ 
and $M_{m-1}$ be a Herbrand interpretation of these predicates. 
The propositional program $\mathsf{P}_m^{M_{m-1}}$ called the reduct of $\mathsf{P}_m$ with respect to $M_{m-1}$ 
is defined as the set of all propositional rules 
$\mathsf{p}_{s(\bar{\mathsf{R}})} \lrule \mathsf{q}^1_{\bar{d_1}},\dots,\mathsf{q}^k_{\bar{d_k}}$
constructed for every rule 
$\mathsf{p}\ \bar{\mathsf{R}} \lrule \mathsf{L}_1,\ldots,\mathsf{L}_n$
in  $\mathsf{P}_m$ and every Herbrand state $s$ that satisfies the following conditions:
\begin{enumerate}
\item Every literal $\mathsf{L}_i$ in the body whose predicates
      are in $U_m$, $\mwrs{\mathsf{L}_i}{M_{m-1}}{s} = \mtrue$.
\item The body atoms $\mathsf{q}^i_{\bar{d_i}}$ correspond exactly to the literals in the body 
      of the form $\mathsf{q}^i\ \mathsf{E}_1\dots\mathsf{E}_t$ where $\mathsf{q}^i$ is not in $U_m$. 
      Each literal maps to the propositional atom $\mathsf{q}^i_{\bar{d}_i}$ where $\bar{d}_i$ is the tuple 
      $\langle \mwrs{\mathsf{E_1}}{M_{m-1}}{s}, \ldots, \mwrs{\mathsf{E_t}}{M_{m-1}}{s} \rangle$.
\end{enumerate}
\end{definition}

%Note that because the original program $\Prog_m$ is stratified linear, the reduct $\mathsf{G}_m$ is also linear.
The following lemma follows easily by the definition, so its proof is omitted.

\begin{lemma}
Let $\mathsf{P} = \mathsf{P}_1 \cup \cdots \cup \mathsf{P}_m$ be a stratified 
Higher-Order Datalog$^{\neg}$ program and $M_{m}$ the unique stable model of $\mathsf{P}$.
Moreover, let $M_{m-1}$ be the unique stable model of $\mathsf{P}_1 \cup \cdots \cup \mathsf{P}_{m-1}$.
If $\mathsf{P} $ is stratified linear then $\mathsf{P}^{M_{m-1}}_m$ is also stratified linear.
More specifically, $\mathsf{P}^{M_{m-1}}_m$ contains at most one atom in each body of a rule.
\end{lemma}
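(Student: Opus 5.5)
The claim is that the reduct $\mathsf{P}^{M_{m-1}}_m$ inherits linearity from $\mathsf{P}$: every rule of the reduct has at most one body atom, and hence it is trivially stratified linear when viewed as a propositional program with a single stratum. The plan is to argue rule by rule, directly from Definition~\ref{reduct}. Fix a rule $\mathsf{p}\ \bar{\mathsf{R}} \lrule \mathsf{L}_1,\ldots,\mathsf{L}_n$ of $\mathsf{P}_m$ and a Herbrand state $s$ satisfying condition~1 of that definition. The body atoms of the generated propositional rule are in one-to-one correspondence with the body literals of the form $\mathsf{q}^i\ \mathsf{E}_1\cdots\mathsf{E}_t$ with $\mathsf{q}^i \notin U_m$. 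It therefore suffices to show that at most one such literal exists.

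First, I fix the stratification $S$ used to decompose $\mathsf{P}$. The predicates outside $U_m$ that can occur in a rule of $\mathsf{P}_m$ are exactly those $\mathsf{q}$ with $S(\mathsf{q}) = S(\mathsf{p})$. Indeed, the first condition of Definition~\ref{stratified} gives $S(\mathsf{q}) \le S(\mathsf{p})$. Moreover, $U_m$ contains every predicate of strictly smaller stratum, because $\mathsf{P}_m$ is the block of rules for the $m$-th stratum. The linearity definition then says that at most one predicate constant $\mathsf{q}$ with $S(\mathsf{q}) = S(\mathsf{p})$ occurs in the body, and that it occurs in exactly one literal $\mathsf{L}_i$. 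So at most one literal contains a predicate outside $U_m$, and consequently at most one propositional atom appears in the body of the reduct rule.

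Next, I check that this unique literal really has the shape required by condition~2 of Definition~\ref{reduct}, namely that it is a positive literal headed by $\mathsf{q}$. There are three cases to exclude.
\begin{itemize}
\item If the literal were of the form $\mathtt{not}\,\mathsf{E}$, the second stratification condition would force $S(\mathsf{q}) < S(\mathsf{p})$, a contradiction.
\item If $\mathsf{q}$ occurred inside an argument $\mathsf{E}_2$ of some application $(\mathsf{E}_1\ \mathsf{E}_2)$, the third stratification condition would again force $S(\mathsf{q}) < S(\mathsf{p})$.
\item If the literal had a variable head, $\mathsf{q}$ would have to sit in an argument position, which is the previous case.
\end{itemize}
Hence $\mathsf{q}$ is the head of the literal. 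Its arguments $\mathsf{E}_j$ contain only predicates from $U_m$, so the values $\mwrs{\mathsf{E}_j}{M_{m-1}}{s}$ are well-defined using $M_{m-1}$ alone. Every other literal involves only $U_m$ predicates and is discharged by condition~1.

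The only delicate point is the bookkeeping between the abstract splitting set $U_m$ and the stratum values of $S$, in particular when $S$ is not contiguous. I would handle this by choosing the decomposition so that $\mathsf{P}_m$ consists exactly of the rules whose head has the $m$-th distinct value of $S$. The equivalence ``$\mathsf{q}\notin U_m$ iff $S(\mathsf{q}) = S(\mathsf{p})$'' then holds for all predicates occurring in $\mathsf{P}_m$. With that fixed, the argument is a direct unfolding of the definitions.
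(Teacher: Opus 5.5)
Your argument is correct and follows essentially the paper's approach: the paper omits the proof as ``following easily by the definition,'' and your argument is exactly that unfolding of Definition~\ref{reduct} together with the stratification and linearity conditions. Fixing the decomposition so that $\mathsf{P}_m$ is the block for the $m$-th value of the same $S$ that witnesses linearity supplies the bookkeeping the paper leaves implicit.
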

% \begin{proof}
% Easy to verify from Definition~\ref{reduct}.
% \end{proof}

We now establish the correspondence between the model of a stratified program $\mathsf{P}$ and 
those of its reducts with respect to the models of the previous subprograms.

\begin{lemma}\label{reduct-models}
Let $\mathsf{P} = \mathsf{P}_1 \cup \cdots \cup \mathsf{P}_m$ be a stratified 
Higher-Order Datalog$^{\neg}$ program and $M_{m}$ the unique stable model of $\mathsf{P}_1 \cup \cdots \cup \mathsf{P}_m$
%Furthermore, assume that all $\iota$ 
%constants that appear in $\mathsf{P}$ also appear in $\mathsf{P}_1$.
, $M_{m-1}$ be the unique stable model of $\mathsf{P}_1 \cup \cdots \cup \mathsf{P}_{m-1}$ and 
$N_m$ be the \emph{minimum} model of $\mathsf{P}^{M_{m-1}}_m$. Then, for every 
propositional atom $\mathsf{p}_{\bar{d}}$ in $\mathsf{P}^{M_{m-1}}_m$, $N_m(\mathsf{p}_{\bar{d}}) = \mtrue$ 
iff $M_m(\mathsf{p})(\bar{d}) = \mtrue$.
\end{lemma}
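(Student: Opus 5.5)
The plan is to compare both sides through the approximator $\ATP$ of the whole program restricted to the top stratum, using Lemma~\ref{splitting-set-models} to freeze the lower strata. Since $U_m$ is a splitting set and $\mathsf{P}_1\cup\dots\cup\mathsf{P}_{m-1}$ is its bottom, Lemma~\ref{splitting-set-models} together with Lemma~\ref{stratified-has-unique-stable} gives that $M_m$ restricted to $U_m$ equals $M_{m-1}$. So $(M_m,M_m)$ is a stable fixpoint of $\ATP$ whose $U_m$-part is $(M_{m-1},M_{m-1})$. In particular $M_m = \lfp \ATP(\cdot, M_m)_1$.

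The first step is to observe that, once the lower-strata part is fixed to $M_{m-1}$ on both components, the first component of $\ATP$ on predicates of $\mathsf{P}_m$ depends monotonically only on the values of $\mathsf{P}_m$-predicates. The reason is the stratification conditions of Definition~\ref{stratified}. Predicates of the current stratum never occur under \texttt{not}, and never occur inside an argument $\mathsf{E}_2$ of an application. They occur only as the head symbol of a positive body atom $\mathsf{q}\,\mathsf{E}_1\cdots\mathsf{E}_t$ whose arguments $\mathsf{E}_j$ contain only $U_m$-predicates.

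Consequently, for a state $s$, the pair semantics of every literal mentioning only $U_m$-predicates is exact: it equals $(v,v)$ with $v=\mwrs{\mathsf{L}_i}{M_{m-1}}{s}$. For an atom $\mathsf{q}\,\mathsf{E}_1\cdots\mathsf{E}_t$ with $\mathsf{q}$ in the top stratum, the arguments evaluate to exact pairs $(d_j,d_j)$. Clause 4 of Definition~\ref{def:pair-semantics} then collapses to $(I(\mathsf{q})(\bar d),J(\mathsf{q})(\bar d))$. Hence, on the interval of interpretations agreeing with $M_{m-1}$ on $U_m$, the first component satisfies
\[
\ATP(I,M_m)_1(\mathsf{p})(\bar d)=\bigvee\{\, \textstyle\bigwedge_i I(\mathsf{q}^i)(\bar d_i) \mid \mathsf{p}_{\bar d}\lrule \mathsf{q}^1_{\bar d_1},\dots \in \mathsf{P}_m^{M_{m-1}}\},
\]
because rule instances whose lower-stratum literals are false contribute $\mfalse$, and the remaining instances are exactly the reduct rules of Definition~\ref{reduct}. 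This is precisely the immediate consequence operator $T$ of the positive propositional program $\mathsf{P}_m^{M_{m-1}}$, under the identification of $I(\mathsf{p})(\bar d)$ with the atom $\mathsf{p}_{\bar d}$.

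The final step is to transfer least fixpoints. Let $\Phi$ map $I$ (with $U_m$-part $M_{m-1}$) to its propositional encoding. The map $\ATP(\cdot,M_m)_1$, restricted to this interval, is conjugate via $\Phi$ to $T$; atoms $\mathsf{p}_{\bar d}$ not appearing in the reduct are $\mfalse$ in both least fixpoints. The least fixpoint of $\ATP(\cdot,M_m)_1$ over all interpretations lies in this interval, because its $U_m$-part must be the least fixpoint of the bottom approximator, which is $M_{m-1}$ (as in the proof of Lemma~\ref{splitting-set-models}). Hence $\Phi(M_m)=\lfp T=N_m$, the minimum model of the definite program.

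I expect the main obstacle to be the collapse step. It requires showing carefully, by induction on expressions, that every subexpression built only from $U_m$-predicates and variables gets an exact pair $(v,v)$. This holds because $I$ and $J$ agree with $M_{m-1}$ there. It also requires showing that the $\bigwedge/\bigvee$ over the interval $[l,u]$ in clause 4 degenerates to a single point. A further point needing care is that an equality literal $\mathsf{E}_1\approx\mathsf{E}_2$ compares pairs, which is harmless since these are exact.
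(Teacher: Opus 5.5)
Your proof is correct, but it takes a different route from the paper. The paper never uses the approximator in this proof. It builds the order-isomorphism $\theta$ between interpretations of $\mathsf{P}$ that extend $M_{m-1}$ and interpretations of the reduct. Using the two-valued semantics, it checks that $I$ is a model of $\mathsf{P}$ iff $\theta(I)$ is a model of $\mathsf{P}^{M_{m-1}}_m$. It then gets $\theta^{-1}(N_m)\le M_m$ from minimumness of $N_m$, and equality from $M_m$ being a minimal model.

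You instead show that $\ATP(\cdot,M_m)_1$, on the slice of interpretations whose $U_m$-part is $M_{m-1}$, is conjugate to the immediate-consequence operator $T$ of the reduct. You then transfer least fixpoints straight from the defining equation of a stable fixpoint. This buys independence from a fact the paper uses without proof: that the stable model $M_m$ is a model and is minimal among models. That fact is true: for a model $M'\le M_m$ one has $(M',M_m)\preceq(M',M')$, hence $\ATP(M',M_m)_1\le\ATP(M',M')_1\le M'$. But that is exactly the operator reasoning you carry out directly. The price is the clause-by-clause collapse of the pair semantics, notably that the interval meet/join in the application clause degenerates to $f(l)$ when the argument pair is exact; the paper's two-valued model check avoids this.

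When writing it up, note that $\ATP(\cdot,M_m)_1$ acts on $\{I\mid I\le M_m\}$, not on all interpretations. So phrase the transfer as two inequalities:
\begin{itemize}
\item $\Phi(M_m)$ is a fixpoint of $T$, so it lies above $\mathrm{lfp}\,T$.
\item The interpretation with $U_m$-part $M_{m-1}$ and top part $\mathrm{lfp}\,T$ lies below $M_m$ and is a prefixpoint of $\ATP(\cdot,M_m)_1$. Its $U_m$-part is reproduced because that part of the operator only sees $(M_{m-1},M_{m-1})$, a fixpoint of the bottom approximator.
\end{itemize}
Your appeal to the splitting-lemma argument to place the least fixpoint in the slice is unnecessary, since that least fixpoint is $M_m$ itself.
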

\begin{proof}
Let $U_{m}$ be the splitting set containing all predicate constants defined in
$P_1 \cup \dots \cup P_{m-1}$. By Lemma~\ref{splitting-set-models}, 
since $U_{m}$ is a splitting set, the restriction of the stable model $M_m$ 
to the predicates in $U_{m}$ must coincide with $M_{m-1}$.

Let ${\cal H}_{m,m-1}$ be the set of interpretations of $\mathsf{P}$ such 
that for every interpretation $I \in \mathcal{H}_{m, m-1}$, 
the restriction of $I$ to $U_{m}$ is exactly $M_{m-1}$. 
Since $M_m$ extends $M_{m-1}$, we have $M_m \in \mathcal{H}_{m, m-1}$.
Let ${\cal H}^*_{m,m-1}$ be the set of interpretations of $\mathsf{P}_m^{M_{m-1}}$. 

We define a bijection $\theta: \mathcal{H}_{m, m-1} \to \mathcal{H}^*_{m,m-1}$. For any
interpretation $I \in \mathcal{H}_{m, m-1}$ and any predicate $\mathsf{p}$
defined in $\mathsf{P}_m$ with arguments $\overline{d}$:
$\theta(I)(\mathsf{p}_{\overline{d}}) = \text{true}$ if and only if
$I(\mathsf{p})(\overline{d}) = \text{true}$. This mapping is a bijection because
the atoms of the propositional reduct $\mathsf{P}_m^{M_{m-1}}$ are constructed
specifically for every tuple $\bar{d}$ of possible inputs. Furthermore, $\theta$ is order
preserving \ie if $I_1 \leq I_2$ then $\theta(I_1) \leq \theta(I_2)$ and vice
versa as one can easily verify.

We now show that $I$ is a model of $\mathsf{P}$ extending $M_{m-1}$ if and
only if $\theta(I)$ is a model of the reduct $\mathsf{P}_m^{M_{m-1}}$.
Assume $I \in \mathcal{H}_{m, m-1}$ is a model of $\mathsf{P}$. Suppose, for the
sake of contradiction, that $\theta(I)$ is not a model of the reduct
$P_m^{M_{m-1}}$. 
This implies there exists a rule $r'$ of the form $\mathsf{p}_{\bar{d}} \leftarrow \mathsf{q}^1_{\bar{d_1}}, \dots, \mathsf{q}^k_{\bar{d_k}}$ 
in $\mathsf{P}_m^{M_{m-1}}$, that is not satisfied by $\theta(I)$. That is, 
the body of the rule is true and the head is false.
By the definition of the reduct, this rule was generated
from a rule $r$ in $\mathsf{P}_m$ and a Herbrand state $s$.
The generation of this rule implies that all literals in $r$ which consist of predicates of the first $m-1$ strata
evaluated to true under $M_{m-1}$ and this state $s$. Moreover, the body atoms $q^i_{\bar{d_i}}$ 
in the reduct correspond to literals in $r$ of the form $\mathsf{q}^i(\mathsf{E_1}, \dots, \mathsf{E_{i_a}})$ where $\mathsf{q}^i$ 
is a predicate symbol belonging to the current stratum $\mathsf{P}_m$ while $\mathsf{E_1}, \dots, \mathsf{E_{i_a}}$ are expressions of lower stratum predicates
which evaluate into $\bar{d_i}=d_1,\cdots,d_{i_a}$. 
Since the body is true in $\theta(I)$ for any $\mathsf{q}^i$ it is $\theta(I)(\mathsf{q}^i_{\bar{d_i}}) = \mtrue$ 
and so it must be $I(\mathsf{q}^i)(\bar{d_i}) = \mtrue$ by the definition of $\theta$.
Consequently, the body of $r$ must be true for that specific state $s$. On the other hand, since $\mathsf{p}_{\bar{d}}$ is $\mfalse$
in $\theta(I)$, the head $I(\mathsf{p})\ \bar{d} = \mfalse$ by definition of $\theta$. This contradicts the assumption 
that $I$ is a model of $\mathsf{P}$ since we found a rule and $s$ that is not satisfied in $I$.

Now assume that $\theta(I)$ is a model of $P_m^{M_{m-1}}$ and suppose $I \in \mathcal{H}_{m, m-1}$ is not
a model of $\mathsf{P}$. The argument that leads to the contradiction follows a similar line of thought as the other direction
and is omitted.

From the previous statement it follows that $\theta^{-1}(N_m)$ is a model of
$\Prog$. But since the bijection is order preserving and $N_m$ is minimum model of $\Prog_m^{M_{m-1}}$ then
for any other model $M' \in {\cal H}_{m,m-1}$ $\theta^{-1}(N_m) \leq M'$. Since
$M_m \in {\cal H}_{m,m-1}$ and $M_m$ is minimal then it must be
$M_m=\theta^{-1}(N_m)$ and the statement of the lemma follows.
\end{proof}

Before establishing the complexity bounds for query evaluation, we must first
characterize the size of the underlying semantic domains and the space required
to represent their elements. The following lemma provides an upper bound on the
size and memory needed for storing values of the respective domain.

\begin{lemma}\label{space-bound}
Let $\Prog$ be a Higher-Order Datalog$^\neg$ program and let $n= |U_{\Prog}|$. 
Let $\rho$ be any type of order $i$. Then
$| \mo{\rho}_{U_{\mathsf{P}}} | \leq \exp_{i}(|U_{\mathsf{P}}|^d)$ for some
natural constant number $d$. Furthermore, for any $d \in \mo{\rho}_{U_{\mathsf{P}}}$, we can
store $d$ in memory using at most $\exp_{i-1}(|U_{\mathsf{P}}|^{d'})$ bits (or
tape cells) for some constant $d'$.
\end{lemma}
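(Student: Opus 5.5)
The plan is a proof by induction on the structure of the type $\rho$, proving both claims together. Since the order of $\rho$ is defined recursively, the induction also tracks the order $i$. Write $n=|U_{\mathsf{P}}|$. There is one caveat about constants. Since $\mathsf{P}$ is fixed, it has finitely many types, so the constants $d$ and $d'$ may depend on $\rho$. It therefore suffices to show, for each type $\rho$ separately, that such constants exist. The statement's reuse of the letter $d$ for both the constant and an element is harmless: I will call the element $v$.

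\emph{Base cases.} For $\rho=\basedom$ (order $0$), $|\mo{\basedom}|=n\le \exp_0(n^1)$. An element can be stored as the index of a constant, using $\lceil\log n\rceil$ bits, which is $\exp_{-1}(n)$ with the paper's convention. For $\rho=\bool$ (order $1$), $|\mo{\bool}|=2\le\exp_1(n^1)$ (assuming $n\ge1$), and one bit suffices, which is at most $\exp_0(n)$.

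\emph{Inductive step.} Let $\rho=\rho_1\to\cdots\to\rho_r\to\bool$ with $r\ge1$, so that $i=\max_j(\mathit{order}(\rho_j)+1)$ and each $\rho_j$ has order $i_j\le i-1$. An element of $\mo{\rho}$ is determined by its graph, i.e.\ by a function from $\mo{\rho_1}\times\cdots\times\mo{\rho_r}$ to $\{\mtrue,\mfalse\}$. Hence
\[
|\mo{\rho}|=2^{\prod_j|\mo{\rho_j}|}.
\]
By the induction hypothesis, $|\mo{\rho_j}|\le \exp_{i_j}(n^{d_j})\le\exp_{i-1}(n^{d_j})$.

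The key auxiliary fact is that products of towers collapse to a single tower with a larger exponent: for $m\ge0$ there is a constant $c$ with
\[
\prod_{j=1}^r \exp_m(n^{d_j})\le \exp_m(n^{c}).
\]
For $m=0$ this is immediate, taking $c=\sum_j d_j$. For $m\ge1$, first bound the product by $\exp_m(n^{\max_j d_j})^r$, then use $\exp_m(x)^r\le\exp_m(r x)$ for $m\ge1$ together with $r\cdot n^{e}\le n^{e+r}$ for $n\ge2$. The small cases $n\le1$ are absorbed into the constants.

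With this fact,
\[
|\mo{\rho}|\le 2^{\exp_{i-1}(n^{c})}=\exp_i(n^{c}),
\]
which is the size claim. For storage, I would represent $v$ as its truth table: the list of bits $v(a_1,\dots,a_r)$ over all argument tuples, enumerated in a fixed canonical order. This uses $\prod_j|\mo{\rho_j}|\le\exp_{i-1}(n^c)$ cells, which is the storage claim with $d'=c$.

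An alternative storage claim is also available: writing the index of $v$ in a canonical enumeration gives $\log_2|\mo{\rho}|\le \exp_{i-1}(n^c)$ bits. This agrees with the truth-table bound, and is consistent with the base case $\basedom$, where the index representation is what gives $\exp_{-1}$.

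\emph{Main obstacle.} The only real work is the arithmetic of the collapse inequality, in particular making it uniform when $i_j<i-1$ for some arguments and when $n$ is small, and keeping track of the $\exp_{-1}$ convention at order $0$. Everything else is direct counting.
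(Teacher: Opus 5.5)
Your proof is correct and is essentially the paper's argument: induction on the order, the count $|\mo{\rho}|=2^{\prod_j|\mo{\rho_j}|}$, and truth-table storage, with your collapse inequality spelling out the step the paper only asserts. One correction: at $n=1$ no enlargement of the exponent helps, since $1^c=1$ (e.g.\ $|\mo{\bool\to\bool\to\bool}|=16>\exp_2(1)=4$), so the small cases are not absorbed into the constants; the lemma, and equally the paper's proof, holds only for $n\geq 2$, which is all the complexity bounds need.
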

\begin{proof}
We prove the statement by induction on the order $i$ of the type $\rho$.
If $i=0$ then we have the individual type $\iota$ .
The size of the domain $\mo{\iota}$ is $|U_\mathsf{P}| = n$, which is $\leq \exp_0(n^1)$.
%The size of $\mo{\bool}$ is $2$, which is trivially bounded by $n$, assuming $n \geq 2$.
Any element $d \in \mo{\iota}$ can be uniquely represented by an index requiring $\lceil \log n \rceil$ bits.
An element of $\mo{\bool}$ requires 1 bit.

Assume the hypothesis holds for all types of order $j < i$.
Let $\rho$ be a type of order $i$. Then, $\rho$ is of the form $\rho_1 \to \cdots \to \rho_m \to \bool$,
where each argument type $\rho_j$ has order $k_j < i$ (so $k_j \leq i-1$). The case where $\rho = \bool$ is trivial and omitted.
The domain $\mo{\rho}$ consists of all functions $\mo{\rho_1} \to \cdots \to \mo{\rho_m} \to \mo{\bool}$.
The size of this domain is:
$|\mo{\rho}| = |\mo{\bool}|^{|\mo{\rho_1}| \times \cdots \times |\mo{\rho_m}|}$.
By the induction hypothesis, for each $j$, $|\mo{\rho_j}| \leq \exp_{k_j}(n^{c_j}) \leq \exp_{i-1}(n^{c_j})$ for some constants $c_j$.
The product of these sizes is bounded by $\exp_{i-1}(n^{d'})$ for some constant $d'$.
Consequently,
\[ |\mo{\rho}| \leq 2^{\exp_{i-1}(n^{d'})} = \exp_i(n^{d'}) \]
for defined constant $d'$.
Furthermore, to store an element $f \in \mo{\rho}$, we can use its full function table representation
(since domains are finite). The size of this representation is equal to the number of entries in the table,
which is the size of the input domain:
\(\prod_{j=1}^m |\mo{\rho_j}| \leq \exp_{i-1}(n^{d'}) \)
bits for some constant $d'$.
\end{proof}

\begin{lemma}\label{expression-computation}
Let $\Prog$ be a $(k+1)$-Order stratified linear Datalog$^{\neg}$ program with $k \geq 1$
that defines a generic query and let $\mathsf{E}$ with type $\rho$
be some expression using predicates from $\Prog$ and $\rho$ is of at most $k$ order. 
There exists a deterministic Turing machine
that takes as input an encoding of a database $D$ that uses $n$ individual
constant symbols and the encoding of a state $s$ for the variables and
computes the meaning $\mwrs{\mathsf{E}}{M}{s}$ where $M$ is
the unique two-valued stable model of $\Prog \cup D$. Furthermore, it does so
using at most $exp_{k-1}(n^d)$ tape cells for some constant $d$.
\end{lemma}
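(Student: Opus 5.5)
We prove the lemma together with a companion statement for the query machine, by simultaneous induction on the strata of $\Prog = \Prog_1\cup\cdots\cup\Prog_m$. The companion statement says that for every stratum $j$ there is a deterministic machine which, given $D$ and a ground goal $(\mathsf{p}\,d_1\cdots d_t)$ with $\mathsf{p}$ in $\Prog_j$ and each $d_i$ a value of order at most $k$, decides $M(\mathsf{p})(\bar d)$ in space $\exp_{k-1}(n^{d})$. The induction hypothesis for stratum $j$ is that both machines are available for all strata below $j$, each within space $\exp_{k-1}(n^{c_j})$. Since the number of strata is a constant independent of $n$, taking the maximum of the finitely many constants $c_j$ gives the uniform constant $d$. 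Throughout, Lemma~\ref{space-bound} guarantees that every value of order at most $k$ is stored by its full function table in $\exp_{k-1}(n^{c})$ cells. In particular, every guessed variable assignment (finitely many variables per rule) and every intermediate value fits in this bound.

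\textbf{Expression evaluator.} We proceed by structural induction on $\mathsf{E}$.
\begin{itemize}
\item Variables and individual constants are read directly from $s$ or copied.
\item For $(\mathsf{X}\,\mathsf{E}_1\cdots\mathsf{E}_t)$ we evaluate the arguments one at a time, storing the tables $d_1,\dots,d_t$. We then read the sub-table of $s(\mathsf{X})$ indexed by $d_1,\dots,d_t$, which is again an object of order at most $k$.
\item For $(\mathsf{q}\,\mathsf{E}_1\cdots\mathsf{E}_t)$ with $\mathsf{q}$ in stratum $m'\le m$, we evaluate and store the $d_i$. We then enumerate, in a fixed order, all tuples $d_{t+1},\dots,d_{t'}$ of the remaining arguments, write each on the tape, and call the query machine for stratum $m'$ to fill in one table entry.
\end{itemize}
The remaining argument types have order at most $k-1$, since $\mathsf{q}$ has order at most $k+1$ and the resulting type has order at most $k$. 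The output table therefore has $\exp_{k-1}(\mathrm{poly}(n))$ entries, and the enumeration counter needs only $\exp_{k-2}$-size space (or, for $k=1$, a polynomial number of indices each of logarithmic size). Space is reused across calls. The depth of nested calls is bounded by the size of $\mathsf{E}$ times the number of strata, which is a constant, so the total space is a constant multiple of $\exp_{k-1}(n^{c})$. Correctness follows from Definition~\ref{def-semantics} and from the query machine deciding $M(\mathsf{q})(\bar d)$, using Lemma~\ref{splitting-set-models} to identify the restriction of $M$ to lower strata with the stable models of the lower subprograms.

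\textbf{Query machine.} For stratum $m+1$ we use Lemma~\ref{reduct-models}: $M(\mathsf{p})(\bar d)$ is true iff the atom $\mathsf{p}_{\bar d}$ is true in the minimum model $N$ of the propositional reduct $\Prog_{m+1}^{M_m}$. By the reduct lemma, every rule of the reduct has at most one body atom. Hence $\mathsf{p}_{\bar d}\in N$ iff the graph whose edges go from a head to its single body atom contains a path from $\mathsf{p}_{\bar d}$ to the head of a fact rule. A node of this graph is a ground goal, which takes $\exp_{k-1}(\mathrm{poly}(n))$ space to store.

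The nondeterministic machine follows this path. At each step it stores the current goal and guesses a rule together with an assignment. It checks the non-recursive literals (and the negated ones) with the evaluator for stratum $m$, and it computes the arguments of the recursive literal with that same evaluator before overwriting the goal. To make every branch halt, we add a step counter bounded by the number of nodes of the graph, $\exp_k(\mathrm{poly}(n))$; writing this counter needs only $\exp_{k-1}(\mathrm{poly}(n))$ bits. Savitch's theorem then gives a deterministic machine with quadratic overhead, and the square of $\exp_{k-1}(n^c)$ is at most $\exp_{k-1}(n^{c'})$ for $k\ge1$ (for $k=1$ it is still polynomial).

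\textbf{Main obstacle.} The delicate point is the interaction between the two machines. The deterministic Savitch simulation must call the evaluator as a subroutine, with oracle answers recomputed each time they are needed rather than stored. We must also check that the evaluator's enumeration of argument tuples and the recursion depth never exceed the $\exp_{k-1}$ budget. The latter reduces to the observation that the stratum nesting depth is constant, and that each layer adds only an additive $\exp_{k-1}(\mathrm{poly}(n))$.
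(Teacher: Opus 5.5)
Your proposal is correct and follows essentially the paper's proof. Like the paper, you use an induction on strata that builds the query machines (via Lemma~\ref{reduct-models}, a top-down search that overwrites the goal as linearity allows, a step counter, and Savitch) together with expression evaluators (structural induction on $\mathsf{E}$, enumerating the remaining argument tuples and calling the query machines). One hypothesis you leave implicit is where genericity is used: it guarantees that each bottom program $b_U(\mathsf{P})$ contains all individual constants, which is needed for Lemma~\ref{splitting-set-models} to apply.
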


\begin{proof}
% For the sake of brevity, in the following we omit a detailed explanation of the
% specific memory encoding for higher-order predicates (or, equivalently,
% higher-order Boolean functions). We will argue instead on the size of these
% representations. Furthermore, the Turing machines (or algorithms) described are
% presented at a higher level of abstraction.

We will prove the statement by induction on the strata of the program 
$\Prog \cup D$. More formally the statement we will show is the following:

Assume that for any expression $\mathsf{E}$ containing only predicates from the first $m$ strata and some Herbrand state
$s$ there exists a Turing machine (an expression evaluator) that computes $\mwrs{\mathsf{E}}{M}{s}$, 
using at most $exp_{k-1}(n^d)$ tape cells for some constant $d$. Then for any expression 
$\mathsf{E}$ containing only predicates from the first $m+1$ strata and some Herbrand state $s$ 
there exists a Turing machine (an expression evaluator) that computes $\mwrs{\mathsf{E}'}{M}{s'}$, 
using at most $exp_{k-1}(n^{d'})$ tape cells for some constant $d'>d$.

We proceed with the base case of the induction. Without loss of generality, we can assume that the first stratum
contains only the external predicates supplied by the input database $D$ in the form of facts.
These are first-order predicates and are not redefined in the query program $\mathsf{P}$.
Thus, we partition by strata the program as $\mathsf{P} \cup D = D \cup \mathsf{P}_1 \cup \cdots \cup \mathsf{P}_{m_{max}}$.
With this assumption the base of the induction is trivial to prove. We proceed with the induction step.

Let $M$ denote the unique stable model of $\mathsf{P} \cup D$. For each $m \leq
m_{max}$, we define $M_m$ to be the unique stable model of $D \cup \mathsf{P}_1 \cdots \cup
\mathsf{P}_m$ and  $U_{m+1}$ the splitting set that corresponds to the predicates of the first $m$
strata such that  $b_{U_{m+1}}(\mathsf{P}) = D \cup \mathsf{P}_1 \cup \cdots \cup
\mathsf{P}_m$. Notice that each constant of type $\iota$ appearing in $\Prog$
must appear in the external predicates of database $D$ since the program defines
a generic query (\ie $\mathsf{P}$ contains no individual constants by itself). Therefore,
for each $m \leq m_{max}$, $b_{U_{m+1}}(\mathsf{P})$ contains every individual constant
and so each $b_{U_{m+1}}(\mathsf{P})$ shares the same Herbrand universe.
By Lemma~\ref{splitting-set-models} the restriction of $M_{m+1}$
with respect to $U_{m+1}$ is the unique stable model of $b_{U_{m+1}}(\mathsf{P})$
and thus coincides with $M_m$. 

We define  helper routines which we call the ``query machines'' 
which we will then use to describe the expression evaluator.

\paragraph{\bf The query machines:}
For any $1 \leq m' \leq m+1$ we define a Turing machine that computes each query of the form 
$M_{{m'}}(\mathsf{p})\ d_1,\ldots,d_j$ for $\mathsf{p} \in \mathsf{P}_{m'}$ 
when given the input database encoded in some sensible form and a set of 
$d_i \in \mo{\rho_i}$ as a function (or a higher-order set) representation and does
so using at most $\exp_{k-1}(poly(n))$ tape cells. Notice that since the types of $\rho_i$ are
used as arguments and thus are of order at most $k$, writing down those elements
in the input takes space at most $j \times exp_{k-1}(poly(n))$ by Lemma~\ref{space-bound}.

By Lemma~\ref{reduct-models} it is enough to compute the least model of the
propositional program $\mathsf{P}^{M_{m'-1}}_{m'}$ and use the bijection to determine the truth value of 
$M_{{m'}}(\mathsf{p})\ d_1,\ldots,d_j$ for $\mathsf{p} \in \mathsf{P}_{m'}$. 
But the machine cannot instantiate this program since each argument has a domain of cardinality
of up to  $\exp_{k}(poly(n))$ thus the set
of all different propositional atoms and rules will exceed the memory limit. 
The machine instead operates in a top-down fashion instantiating the rules as needed. The equivalence of this procedure 
with the bottom-up computation is well-known in literature for propositional programs.
The machine operates as follows:

\begin{enumerate}
\item It starts with the atom $(\mathsf{p}\,d_1\cdots d_n)$ written on the tape, where the $d_i$ are
      representations of expression values, \ie a string that describes the value\footnote{For example, a relation can be represented as a string listing all tuples in some predetermined order.}.
\item \label{select-step-appendix}Non-deterministically selects a rule
      $\mathsf{p}\ \mathsf{R}_1 \cdots \mathsf{R}_j \lrule \mathsf{L}_1, \dots, \mathsf{L}_r$
      from $\Prog_{m'}$ and guesses a valid assignment for the rule's variables such that
      $d_i$ is assigned to $\mathsf{R}_i$ for all $i \in \{ 1,\dots, j\}$.
\item Iterates through the literals $\mathsf{L}_i$ in the rule body:
\begin{itemize}
  \item If $\mathsf{L}_i = ({\tt not}\ \mathsf{E})$ is a negative literal then the machine computes
        $\mwrs{\mathsf{L}_i}{M_{m'-1}}{s[\bar{\tt R}/\bar{d}]}=\mwrs{\mathsf{L}_i}{M}{s[\bar{\tt R}/\bar{d}]}$
        by invoking an expression evaluator for the $m'-1$ strata defined for $\mathsf{E}$
        to compute the value of $\mwrs{\mathsf{E}}{M}{s[\bar{\tt R}/\bar{d}]}$. The existance of this evaluator
        is given by the induction hypothesis since $m'-1 \leq m$. Then it performs the negation operator to the output. 
        The machine that is invoked uses at most $\exp_{k-1}(poly(n))$ space which is freed after the computation.
  \item If $\mathsf{L}_i =\mathsf{E}$ is a positive literal of no predicates from stratum $m'$ then the machine computes
        $\mwrs{\mathsf{L}_i}{M_{m'-1}}{s[\bar{\tt R}/\bar{d}]}=\mwrs{\mathsf{L}_i}{M}{s[\bar{\tt R}/\bar{d}]}$
        by invoking an expression evaluator for $m'-1$  strata defined for $\mathsf{E}$  
        to compute the value of $\mwrs{\mathsf{E}}{M}{s[\bar{\tt R}/\bar{d}]}$. The existence of this evaluator
        is given by the induction hypothesis since $m'-1 \leq m$.
        The machine that is invoked uses at most $\exp_{k-1}(poly(n))$ space which is freed after each computation.
  \item If there is a predicate in $\mathsf{L}_i$ from the $(m')$-th stratum, then this is the unique
        recursive literal $\mathsf{q}\ \mathsf{E}_1 \cdots \mathsf{E}_t$ implied by linearity.
        Every $\mathsf{E}_i$ is of order at most $k$ and includes predicates from
        the first $m$ strata. Iterate through $\mathsf{E}_1,\dots,\mathsf{E}_t$ invoking each time the corresponding evaluator,
        passing as parameter the guessed variable assignment, to compute the value that corresponds to each $  \mathsf{E}_i$ 
        and store the representations $d'_1,\dots,d'_t$.
        Construct a new ground atom $\mathsf{q}\,d'_1 \cdots d'_t$ with those representations. The memory used for this atom 
        is bounded by $\exp_{k-1}(poly(n))$ by Lemma~\ref{space-bound} given the order of the argument types.
\end{itemize}
\item If all literals $\mathsf{L}_i$ evaluate to $\mtrue$ without encountering a recursive literal,
      then evaluate the query result as true.
\item Otherwise, if one of the literals $\mathsf{L}_i$ is a recursive one, overwrite the current goal with $\mathsf{q}\,d'_1 \cdots d'_t$
      and loop back to Step~\ref{select-step-appendix}. The size of goal never exceeds $\exp_{k-1}(poly(n))$ cells.
\end{enumerate}
Note that the aforementioned query machine is non-deterministic and may contain
computational paths that do not terminate (\eg consider the rule
\lstinline`p:-p`). So, this machine is not technically a decider of whether
$M_{{m'}}(\mathsf{p})\ d_1,\ldots,d_j$ for $\mathsf{p} \in \mathsf{P}_{m'}$ is true or false. 
However, for machines that are
space bounded (more specifically by a space constructible function
$f(n)$~\citep{Sipser}) there exists a corresponding non-deterministic Turing
machine where every computational path terminates. In particular, the machine includes an initial step that given the input size $n$ 
it computes the the value $\exp_{k}(poly(n))$ in $\exp_{k-1}(poly(n))$ space and uses this isolated space to simulate
a step counter. Each computation path will then auto-reject if the number of steps has exceeded the number 
of possible configurations.

Finally, by~\cite{Savitch1970}, there
also exists a deterministic machine that decides whether $M_{{m'}}(\mathsf{p})\ d_1,\ldots,d_j$ for $\mathsf{p} \in \mathsf{P}_{m'}$
is true or false in the same space complexity class. In the following text 
when we refer to the ``query'' machine we mean the deterministic version.

\paragraph{\bf The expression evaluator:}
We now show how to extend this computation to any arbitrary fixed expression
$\mathsf{E}$ of order at most $k$ that contains predicates of the first $m+1$
strata while still using memory that is bounded by $\exp_{k-1}(poly(n))$. To be entirely formal the calculation of 
memory usage should also account for the length of the expression. For reasons of brevity, we avoid to account for that
since we are interested in how bounded expressions (\eg those appearing in the query program) 
behave under variable input database sizes.

The evaluator machine for an expression $\mathsf{E}$ will invoke in sequence the previous query machines when needed.
We show the existence of such machine and argue about the memory used by induction on the length 
of the expression. For the base
case, assume that the length of $\mathsf{E}$ is 1. It is easy to verify that evaluators using at most $\exp_{k-1}(poly(n))$ space exist 
for all expressions of length 1. We have two cases:
\begin{enumerate}
\item $\mathsf{E}=\mathsf{X}$ for some variable or $\mathsf{E}=\mathsf{c}$ for some $\iota$ constant.
\item $\mathsf{E}=\mathsf{p}$ for some constant predicate $\mathsf{p}$.
\end{enumerate}
The first case is trivial since we use the variable assignment.
For the second case, let $\mathsf{p} \in \mathsf{P}_{m'}$ for some $m' \leq m+1$.
By the order assumption of $\mathsf{E}$, $\mathsf{p}$ has a type of order at most $k$. 
We generate the different tuples of possible arguments 
in an increasing manner. For each combination we invoke the 
$m'$-query machine described previously to get the corresponding truth value forming a set of size $\exp_{k-1}(poly(n))$ in memory.
The query machine itself operates in $\exp_{k-1}(poly(n))$ space that is
reused for each separate query and  the total result is stored in $\exp_{k-1}(poly(n))$ memory.

We proceed with the general step of the induction. Assume that for every 
expression of length at most $l$ there exist evaluators that compute it
in at most $\exp_{k-1}(poly(n))$ tape cells. Let $\mathsf{E}$ be an expression
of size $l+1$. We have two cases:
\begin{enumerate}
\item $\mathsf{E}=\mathsf{X}\ \mathsf{E}_1\ \mathsf{E}_2 \ldots \mathsf{E}_t$ for some variable $\mathsf{X}$.
\item $\mathsf{E}=\mathsf{q}\ \mathsf{E}_1\ \mathsf{E}_2 \ldots \mathsf{E}_t$ for some predicate $\mathsf{q}$.
\end{enumerate}
Both cases are easy to argue about. We show the second. Let $\mathsf{q} \in \mathsf{P}_{m'}$ with $m' \leq m+1$.
Notice how for each $\mathsf{E}_i$ it is an argument and thus the order of its type is at most $k$. We perform the following steps:
\begin{enumerate}
\item Evaluate the expressions $\mathsf{E}_1, \dots, \mathsf{E}_t$ sequentially
      to obtain and store their representations $d_1,\dots,d_t$. Since every $\mathsf{E_i}$ 
      has length at most $l$ and type of order at most $k$,
      by inductive hypothesis each can be computed by an appropriate evaluator 
      and stored using total memory bounded by $\exp_{k-1}(poly(n))$.
\item To determine the representation of the result, the machine enumerates all possible
      tuples  $d_{t+1}, \dots, d_{t'}$ in increasing fashion regarding the remaining arguments.
      For each tuple, it invokes the query machine for the $m'$-th stratum with the goal
      $(\mathsf{q}\, d_1 \cdots d_{t}\, d_{t+1} \cdots d_{t'})$ in reusable memory.
\end{enumerate}
The bound on the representation formed is $\exp_{k-1}(poly(n))$ that follows from
the order of $\mathsf{E}$ and Lemma~\ref{space-bound}. 
%
%
% Each $\mathsf{E}_i$
% appears in an argument position therefore has an order of at most $k$ and has
% length less or equal to $l$. By inductive hypothesis each can be computed and
% stored in at most $\exp_{k-1}(poly(n))$ tape cells. Let $\mathsf{q}$ have a type
% of $\rho= \rho_1 \to \cdots \to \rho_{t'} \to \bool$ with $t' \geq t$. For any
% $i,t<i \leq t'$ since the expression $\mathsf{E}$ has a type of order at most
% $k$ then each $\rho_i$ has order of at most $k-1$. For the fixed set of
% initially computed arguments $d_1,d_2,\ldots,d_t$ completed with every possible
% set of possible arguments $d_{t+1},\ldots,d_{t'}$ we invoke the query machine
% described previously. The bound on the total memory size follows immediately.
\end{proof}

\begin{retheorem}{upper-bound-theorem}
Let $\Prog$ be a $(k+1)$-order Stratified Linear  Datalog$^{\neg}$ program that defines a
query. Then, there exists a deterministic Turing machine that takes as input an
encoding of a database $D$ that uses $n$ individual constant symbols and a
ground atom $p(\bar{a})$, where $p$ is a predicate constant of $\Prog$ and
$\bar{a}$ is a tuple of those individual constants, and decides whether
$p(\bar{a}) \in \mathcal{Q}_\mathsf{P}(D)$, while using at most $\exp_{k-1}(n^d)$
tape cells for some constant $d$.
\end{retheorem}
\begin{proof}
Immediate from Lemma~\ref{expression-computation}.
\end{proof}
\section{The Graph Recoloring Problem and its Transformation to QBF}\label{recoloring-qbf-eval}

The theoretical alignment between the second-order fragment of Stratified Linear
Higher-Order Datalog and \textsf{PSPACE} naturally motivates compiling these
programs into Quantified Boolean Formulas (QBFs), allowing highly optimized
modern solvers to serve as the execution backend. Rather than presenting a
systematic treatment, this section serves as a proof of concept by walking
through the Graph Recoloring program presented in
Example~\ref{graph_recoloring_example}.

To intuitively illustrate the mechanics of this translation, we break the
process down into three phases: grounding the first-order variables
into boolean vectors, compiling the lower strata into a boolean
formula, and utilizing universal quantification to compress the linear path
search.

\subsection{Grounding the Configuration Space}
In first-order Datalog, the active state of a computation is typically a single
node or a tuple of domain elements. In second-order Datalog, the active state is
an entire relation.

Consider an input graph with $N$ vertices (nodes) and a predefined set of $K$
available colors. The configuration of the board at any given moment is fully
captured by the existentially quantified first-order binary relation $C$, where
$C(X, Col)$ evaluates to true if vertex $X$ is painted with color $Col$. Because
the domain of vertices and colors is strictly finite, the total number of
possible configurations is also finite.

We can systematically flatten this first-order relation into a one-dimensional
boolean vector $\vec{c}$ of length $N \times K$. Each bit $c_{x, col}$ in this
vector represents the truth value of the corresponding atom $C(X, Col)$.
Therefore, the dynamic evaluation of the 2nd-order Datalog program is
mathematically equivalent to traversing a massive, albeit finite, graph of size
$2^{N \times K}$, where each node is a specific boolean bit-vector representing
a distinct coloring state.

\subsection{Compiling the Static Strata into Formulas}

The strict stratification of the language (Definition~\ref{stratified}) guarantees that rules
are evaluated in a feed-forward manner without negative cycles. Because they
operate entirely on known, static structures (the graph edges $E$, the vertices
$V$, the colors $K$, and the boolean vectors), they act purely as declarative
constraints. A compiler can systematically map each of these rules directly into
a static, quantifier-free boolean formula.

First, we define the boolean formulas ensuring a vector $\vec{c}$ represents a
valid graph coloring based on the logic:
\begin{align*}
T_{\mathit{conflict}}(\vec{c}) &= \bigvee\nolimits_{(x,y) \in E} \bigvee\nolimits_{col \in K} (c_{x, col} \land c_{y, col}) \\
T_{\mathit{multiple\_colors}}(\vec{c}) &= \bigvee_{x \in V} \bigvee\nolimits_{col_1, col_2 \in K} (c_{x, col_1} \land c_{x, col_2} \land col_1 \neq col_2) \\
T_{\mathit{uncolored}}(\vec{c}) &= \bigvee\nolimits_{x \in V} \bigwedge\nolimits_{col \in K} \neg c_{x, col} \\
T_{\mathit{invalid\_coloring}}(\vec{c}) &= T_{conflict}(\vec{c}) \lor T_{multiple\_colors}(\vec{c}) \lor T_{uncolored}(\vec{c}) \\
T_{\mathit{valid\_coloring}}(\vec{c}) &= \neg T_{\mathit{invalid\_coloring}}(\vec{c})
\end{align*}

Next, we define the formulas ensuring the transition from state $\vec{c}_1$ to
state $\vec{c}_2$ modifies exactly one vertex. We first define a helper formula
$T_{\mathit{diff}}(x, \vec{c}_1, \vec{c}_2)$ which is true if vertex $x$ loses a color it
had in $\vec{c}_1$ (implying a change between the states):
\begin{align*}
T_{\mathit{diff}}(x, \vec{c}_1, \vec{c}_2) &= \bigvee\nolimits_{col \in K} (c_{1, x, col} \land \neg c_{2, x, col}) \\
T_{\mathit{has\_diff}}(\vec{c}_1, \vec{c}_2) &= \bigvee\nolimits_{x \in V} T_{\mathit{diff}}(x, \vec{c}_1, \vec{c}_2) \\
T_{\mathit{multiple\_diffs}}(\vec{c}_1, \vec{c}_2) &= \bigvee\nolimits_{x, y \in V} (T_{\mathit{diff}}(x, \vec{c}_1, \vec{c}_2) \land T_{\mathit{diff}}(y, \vec{c}_1, \vec{c}_2) \land x \neq y) \\
T_{\mathit{one\_diff}}(\vec{c}_1, \vec{c}_2) &= T_{\mathit{has\_diff}}(\vec{c}_1, \vec{c}_2) \land \neg T_{\mathit{multiple\_diffs}}(\vec{c}_1, \vec{c}_2)
\end{align*}

Finally, by conjoining these constraints, the compiler generates
the formula $T_{\mathit{valid\_step}}(\vec{c}_1, \vec{c}_2)$ for the \lstinline`valid_step` predicate:
\begin{equation*}
T_{\mathit{valid\_step}}(\vec{c}_1, \vec{c}_2) = T_{\mathit{valid\_coloring}}(\vec{c}_1) \land T_{\mathit{valid\_coloring}}(\vec{c}_2) \land T_{\mathit{one\_diff}}(\vec{c}_1, \vec{c}_2)
\end{equation*}

% This formula acts as the ``hardware verification circuit'' of our state machine.
% It evaluates to \textit{true} if and only if the vector $\vec{c}_2$ represents a
% perfectly valid single-token recoloring step from the vector $\vec{c}_1$.

\subsection{Compressing the Linear Path via Savitch's Theorem}
The final stratum of our program introduces the predicate
\lstinline`recolorable` which is recursive. Because the rule contains at most
one recursive call to itself, the evaluation does not branch into an
exponentially expanding search tree.

Given our configuration space, the maximum possible length of a non-looping path
is bounded by $2^{N \times K}$. Naively asking a boolean solver to search for a
path of this length would require unrolling the formula $T_{\mathit{valid\_step}}$
exponentially many times. For instance, testing a path of length $m$ requires
writing $\exists \vec{v}_1 \dots \exists \vec{v}_m [T_{\mathit{valid\_step}}(\vec{start}, \vec{v}_1)
\land \dots \land T_{\mathit{valid\_step}}(\vec{v}_{m-1}, \vec{target})]$. If $m = 2^{N \times K}$, the
resulting formula would be exponentially larger than the memory capacity of any
physical machine, rendering the translation useless.

To prevent this exponential blow-up, the compiler leverages the core principle
of Savitch's theorem, utilizing the universal quantifiers of QBF to
act as a spatial compression mechanism. We recursively define a bounded path
formula $P^i_{\mathit{recolorable}}(\vec{u}, \vec{v})$ which evaluates to true if there is a valid
path from $\vec{u}$ to $\vec{v}$ of length at most $2^i$.

For the base case:
$P^0_{\mathit{recolorable}}(\vec{u}, \vec{v}) \equiv (\vec{u} \approx \vec{v}) \lor T_{\mathit{valid\_step}}(\vec{u}, \vec{v})$.
For the recursive step:
$$P^i_{\mathit{recolorable}}(\vec{u}, \vec{v}) = \exists \vec{m} \forall \vec{a} \forall \vec{b} \left[ \left( (\vec{a}\approx\vec{u} \land \vec{b}\approx\vec{m}) \lor (\vec{a}\approx\vec{m} \land \vec{b}\approx\vec{v}) \right) \rightarrow P^{i-1}_{\mathit{recolorable}}(\vec{a}, \vec{b}) \right]$$

In this recursive definition, we posit the existence of an intermediate halfway
state $\vec{m}$. Instead of explicitly writing the formula $P^{i-1}_{\mathit{recolorable}}$ twice
(once for the left half $\vec{u} \to \vec{m}$ and once for the right half
$\vec{m} \to \vec{v}$), the universal quantifiers $\forall \vec{a}, \vec{b}$ act
as a structural umbrella. They force the solver to verify both halves of the
journey using only a single, mathematically compressed copy of the $P^{i-1}_{\mathit{recolorable}}$
sub-formula.

% By expanding this definition up to the required logarithmic depth ($i = N \times K$),
% the compiler generates a final QBF query of polynomial size $O((N \times K)^2)$.

% In conclusion, the linearity of the Datalog fragment ensures the problem is a
% pure path search, its stratification allows the static compilation of the
% transition matrix, and its second-order nature bounds the space to dimensions
% that Savitch's algorithm can neatly compress. This pipeline successfully shifts
% the computational burden from a hypothetical higher-order declarative engine to
% highly optimized, space-efficient QBF solvers.

% \begin{thebibliography}{1}
% \bibitem{Bonsma2009}
% P. Bonsma and L. Cereceda,
% ``Finding paths between graph colourings: PSPACE-completeness and superpolynomial distances,''
% \textit{Theoretical Computer Science}, vol. 410, pp. 5215-5226, 2009.
% \end{thebibliography}
\fi
\end{document}